\newcommand{\method}{\textsc{CoHeat}\xspace}
\newcommand{\methodfull}{Popularity-based \underline{Co}alescence and Curriculum \underline{Heat}ing\xspace}
\newcommand*{\ie}{i.e.\@\xspace}
\newtheoremstyle{problemstyle}  
        {3pt}                                               
        {3pt}                                               
        {\normalfont}                               
        {}                                                  
        {\bfseries\itshape}                 
        {\normalfont\bfseries:}         
        {.5em}                                          
        {}                                                  
\theoremstyle{problemstyle}
\newtheorem{property}{Property}
\begin{document}

\title{Cold-start Bundle Recommendation via Popularity-based Coalescence and Curriculum Heating}

\author{Hyunsik Jeon}
\authornote{This work was done when the author was at Seoul National University.}
\affiliation{
	\institution{UC San Diego}
	\city{San Diego, CA}
	\country{USA}
}
\email{hyjeon@ucsd.edu}

\author{Jong-eun Lee}
\affiliation{
	\institution{Seoul National University}
	\city{Seoul}
	\country{Republic of Korea}
}
\email{kjayjay40@snu.ac.kr}

\author{Jeongin Yun}
\affiliation{
	\institution{Seoul National University}
	\city{Seoul}
	\country{Republic of Korea}
}
\email{yji00828@snu.ac.kr}

\author{U Kang}
\affiliation{
	\institution{Seoul National University}
	\city{Seoul}
	\country{Republic of Korea}
}
\email{ukang@snu.ac.kr}

\renewcommand{\shortauthors}{Hyunsik Jeon, Jong-eun Lee, Jeongin Yun, \& U Kang}

\begin{abstract}
How can we recommend cold-start bundles to users?
The cold-start problem in bundle recommendation is crucial because new bundles are continuously created on the Web for various marketing purposes.
Despite its importance, existing methods for cold-start item recommendation are not readily applicable to bundles.
They depend overly on historical information, even for less popular bundles, failing to address the primary challenge of the highly skewed distribution of bundle interactions.
In this work, we propose \method (\methodfull), an accurate approach for cold-start bundle recommendation.
\method first represents users and bundles through graph-based views, capturing collaborative information effectively.
To estimate the user-bundle relationship more accurately, \method addresses the highly skewed distribution of bundle interactions through a popularity-based coalescence approach, which incorporates historical and affiliation information based on the bundle's popularity.
Furthermore, it effectively learns latent representations by exploiting curriculum learning and contrastive learning.
\method demonstrates superior performance in cold-start bundle recommendation, achieving up to $193\%$ higher nDCG$@20$ compared to the best competitor.
\end{abstract}

\begin{CCSXML}
<ccs2012>
   <concept>
       <concept_id>10002951.10003317.10003347.10003350</concept_id>
       <concept_desc>Information systems~Recommender systems</concept_desc>
       <concept_significance>500</concept_significance>
       </concept>
 </ccs2012>
\end{CCSXML}

\ccsdesc[500]{Information systems~Recommender systems}

\keywords{bundle recommendation, curriculum learning, contrastive learning}

\maketitle

\vspace{-1em}
\section{Introduction}
\label{sec:intro}
\begin{figure}[t]
    \centering
    \begin{minipage}{.54\linewidth}
            \begin{subfigure}[t]{.92\linewidth}
                \includegraphics[width=\textwidth]{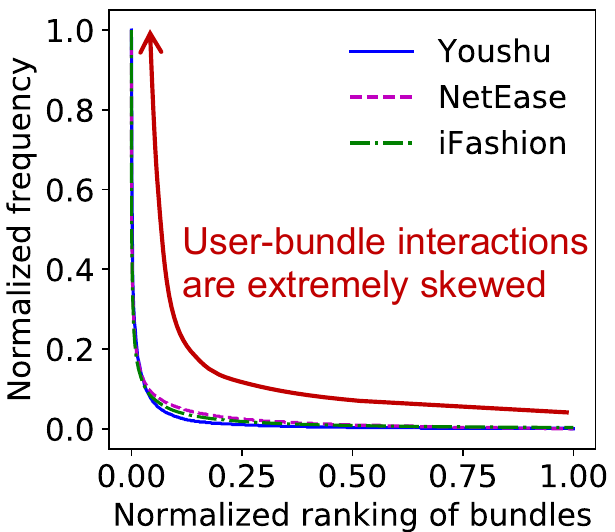}
                \caption{Extremely skewed distribution of bundle interactions.}
                \label{fig:pop_bias}
            \end{subfigure}
        \end{minipage}
    \begin{minipage}{.45\linewidth}
        \begin{subfigure}[t]{.9\linewidth}
            \includegraphics[width=\textwidth]{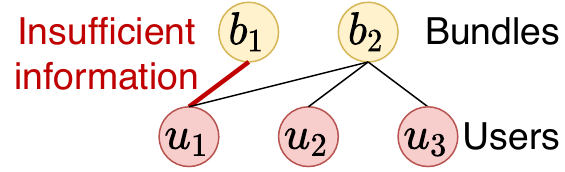}
            \caption{Insufficient information from user-bundle view.}
            \label{fig:insufficiency}
        \end{subfigure} \\
        \begin{subfigure}[b]{.9\linewidth}
            \includegraphics[width=\textwidth]{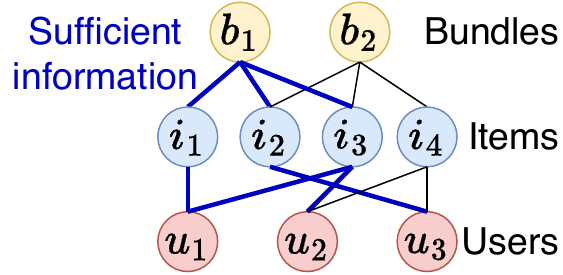}
            \caption{Sufficient information from user-item view.}
            \label{fig:sufficiency}
        \end{subfigure}
    \end{minipage}
    \caption{
        (a) Extremely skewed distribution of bundle interactions in real-world datasets (data statistics are summarized in Table~\ref{table:datasets}).
        (b-c) For an unpopular bundle, user-bundle view provides insufficient information while user-item view provides sufficient information.
    }
\end{figure}

\emph{How can we recommend cold-start bundles to users?}
Recommender systems have been significant in online platforms providing users with a personalized list of relevant items~\cite{Mu18,ParkJK17,JeonKK19,KooJK20,KooJK21,10.1145/3511808.3557226,KimJLK23}.
In recent years, bundle recommendation has garnered significant attention in both academia and industry since it enables providers to offer items to users with one-stop convenience~\cite{MaHZWC22}.
In particular, recommending new bundles to users (\ie cold-start bundle recommendation) has become crucial with the Web's evolution as new bundles are constantly created on the Web for various marketing purposes~\cite{ChenL0GZ19}.

In recent years, bundle recommendation has seen advancements mainly through matrix factorization-based approaches~\cite{PathakGM17,CaoN0WZC17,ChenL0GZ19,BroshLSSL22,HeWNC19} and graph learning-based approaches~\cite{DengWZZWTFC20,ChangG0JL20,ChangGHJL23,MaHZWC22,ZhangDT22,WeiLMWNC23,zhao2022multi,YuLCZ22,RenZFWZ23}.
However, they have been developed for a warm-start setting, where all bundles already possess historical interactions with users.
Consequently, their efficacy diminishes in cold-start scenarios, where certain bundles are devoid of historical interactions.
This is because warm-start methods rely highly on historical information of user-bundle interactions to discern collaborative signals between users and bundles~\cite{Sun0F0QO22}.

\begin{figure*}[t]
        \centering
        \begin{subfigure}{0.9\linewidth}
                \centering
                \includegraphics[width=1\linewidth]{./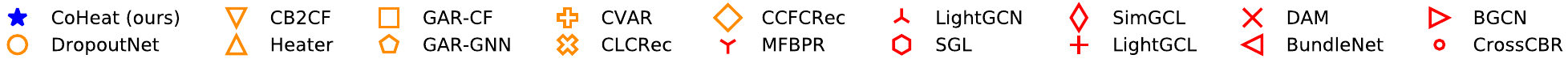}
                \label{fig:comparison_legend}
                \vspace{-1.2em}
        \end{subfigure}
        \\
        \begin{subfigure}{.28\linewidth}
                \centering
                \includegraphics[width=1\linewidth]{./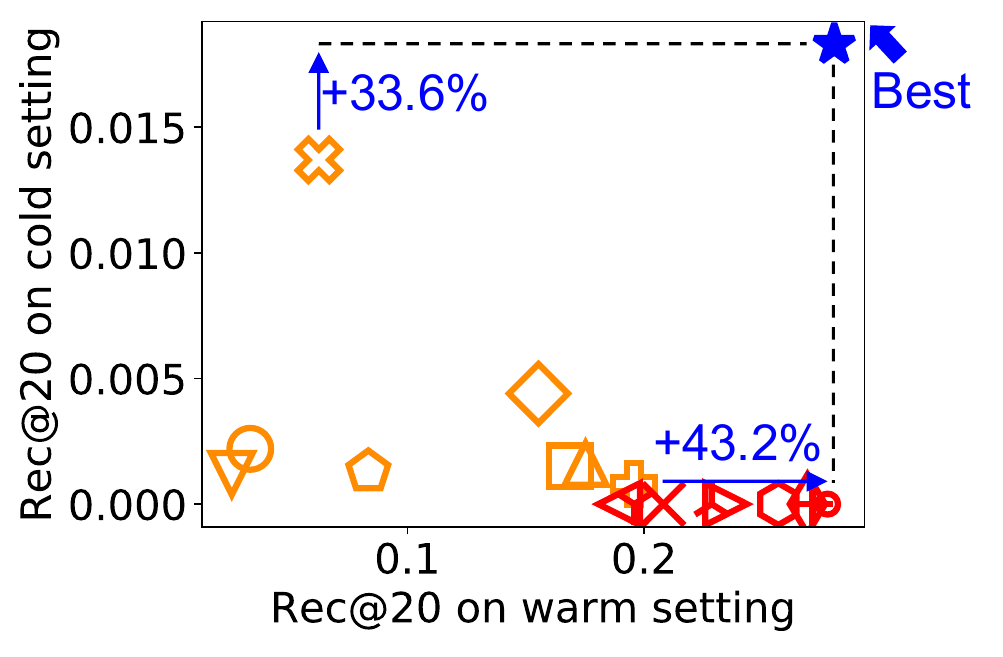}
                \vspace{-2em}
                \caption{Youshu}
                \label{fig:comparison_youshu}
        \end{subfigure}
        ~
        \begin{subfigure}{.28\linewidth}
                \centering
                \includegraphics[width=1\linewidth]{./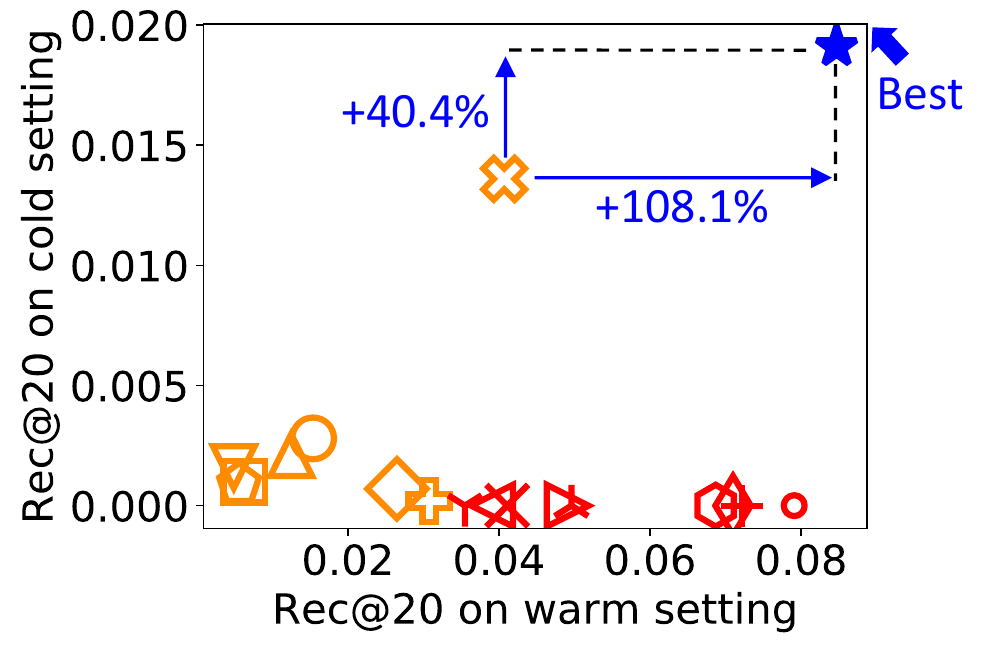}
                \vspace{-2em}
                \caption{NetEase}
                \label{fig:comparison_netease}
        \end{subfigure}
        ~
        \begin{subfigure}{.28\linewidth}
                \centering
                \includegraphics[width=1\linewidth]{./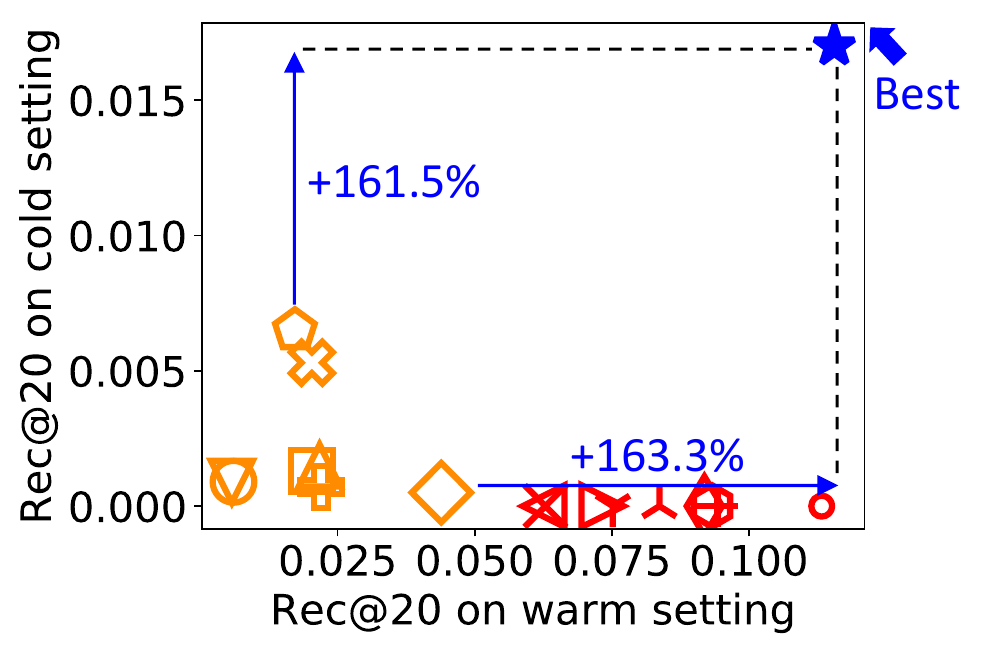}
                \vspace{-2em}
                \caption{iFashion}
                \label{fig:comparison_ifashion}
        \end{subfigure}
        \caption{
        Performance comparison of \method with competitors on Youshu, NetEase, and iFashion datasets, evaluated by Recall$@20$.
        We mark cold-start methods as orange, and warm-start methods as red.
        The cold-start methods typically sacrifice warm setting performance to excel in cold settings.
        The warm-start methods show poor performance in cold settings.
        \method demonstrates superior performance over existing methods in both cold and warm settings.
        }
        \label{fig:comparison}
\end{figure*}

On the other hand, the cold-start problem~\cite{ScheinPUP02} in item recommendation has been extensively studied, with a focus on aligning behavior representations with content representations.
For instance, generative methods have aimed to model the generation of item behavior representations using mean squared error~\cite{OordDS13,WangWY15}, metric learning~\cite{ZhangYLXM16}, or adversarial loss~\cite{ChenWHHXLH022}.
Dropout-based methods~\cite{VolkovsYP17,ZhuSSC20} have aimed to bolster robustness to behavior information by randomly dropping the behavior embedding in the training phase.
More recently, contrastive learning-based methods~\cite{WeiWLNLLC21,ZhouZY23} have shown superior performance by reducing the discrepancy between the distributions of behavior and content information of items.
However, the existing methods for cold-start item recommendation fail to achieve high performance in bundle recommendation because they lack the ability to effectively leverage the user-item historical interactions when representing bundles.
Furthermore, none of the existing works have explicitly considered the skewed distribution of user-bundle interactions which is a pivotal aspect in bundle recommendation as shown in Figure~\ref{fig:pop_bias}.
For unpopular bundles, aligning behavior representations from insufficient historical information with content representations amplifies inherent biases and makes it difficult to learn meaningful representations;
this results in sacrificing the performance on a warm-start setting to improve the performance on a cold-start setting (see Figure~\ref{fig:comparison}).

In this paper, we propose \method (\methodfull), to address the cold-start bundle recommendation.
\method constructs representations of users and bundles using two distinct graph-based views: user-bundle view and user-item view.
The user-bundle view is grounded in historical interactions between users and bundles, whereas the user-item view is rooted in bundle affiliations and historical interactions between users and items.
To handle the extremely skewed distribution as shown in Figure~\ref{fig:pop_bias}, \method strategically leverages both views in its predictions by our proposed popularity-based coalescence;
this emphasizes user-item view for less popular bundles since they provide richer information than the sparse user-bundle view, as shown in Figures~\ref{fig:insufficiency} and~\ref{fig:sufficiency}.
In addition, to effectively learn the user-item view representations which are fully used for cold-start bundles, \method exploits a curriculum learning approach that gradually shifts the training focus from the user-bundle view to the user-item view.
\method further exploits a contrastive learning approach to align the representations of the two views effectively.

Our contributions are summarized as follows:
\begin{itemize}[leftmargin=6mm]
        \item \textbf{Problem.} To our knowledge, this is the first work that tackles the cold-start problem in bundle recommendation, a challenging problem of significant impact in real-world scenarios.
        \item \textbf{Method.} We propose \method, an accurate cold-start bundle recommendation method. \method learns user and bundle representations by addressing highly skewed interactions, enabling it to recommend cold bundles through their affiliations.
        \item \textbf{Experiments.} We experimentally show that \method provides the state-of-the-art performance achieving up to 193\% higher nDCG$@20$ compared to the best competitor in cold-start scenarios while maintaining competitive performance in warm-start scenarios. (see Figure~\ref{fig:comparison} and Table~\ref{table:cold_performance}).
\end{itemize}

The code is available at \url{https://github.com/snudatalab/CoHeat}.

\section{Preliminaries}
\label{sec:preliminary}


\subsection{Problem Definition}
\label{subsec:problem_definition}
The problem of cold-start bundle recommendation is defined as follows.
Let $\mathcal{U}$, $\mathcal{B}$, and $\mathcal{I}$ be the sets of users, bundles, and items, respectively.
Among the bundles, $\mathcal{B}_{w} \subset \mathcal{B}$ refers to the warm-start bundles that have at least one historical interaction with users,
while $\mathcal{B}_{c} = \mathcal{B} \setminus \mathcal{B}_{w}$ represents the cold-start bundles that lack any historical interaction with users.
The observed user-bundle interactions, user-item interactions, and bundle-item affiliations are defined respectively as $\mathcal{X}=\{(u,b)|u\in\mathcal{U}, b\in\mathcal{B}_{w}\}$, $\mathcal{Y}=\{(u,i)|u\in\mathcal{U}, i\in\mathcal{I}\}$, and $\mathcal{Z}=\{(b,i)|b\in\mathcal{B}, i\in\mathcal{I}\}$.
Given $\{\mathcal{X}, \mathcal{Y}, \mathcal{Z}\}$, our goal is to recommend $k$ bundles from $\mathcal{B}$ to each user $u\in\mathcal{U}$.
Note that the given interactions are observed only for warm bundles but the objective includes recommending also cold bundles to users.

The central challenge in cold-start bundle recommendation, compared to traditional bundle recommendation, lies in accurately predicting the relationship between a user $u\in\mathcal{U}$ and a cold-start bundle $b\in\mathcal{B}_{c}$ in the absence of any historical interactions of $b$.
Hence, the crux of addressing the problem is to effectively estimate the representations of cold-start bundles using their affiliation information.

Bundles are compositions of multiple items, each having distinct interactions with users.
This contrasts sharply with traditional cold-start item recommendations where contents are often represented as independent entities like texts or images.
Moreover, user-bundle interactions exhibit a pronounced skewness, far more than typical user-item interactions.
These make the cold-start bundle recommendation uniquely challenging compared to the cold-start item recommendation.

\begin{figure*}[t]
	\centering
	\includegraphics[width=0.9\linewidth]{./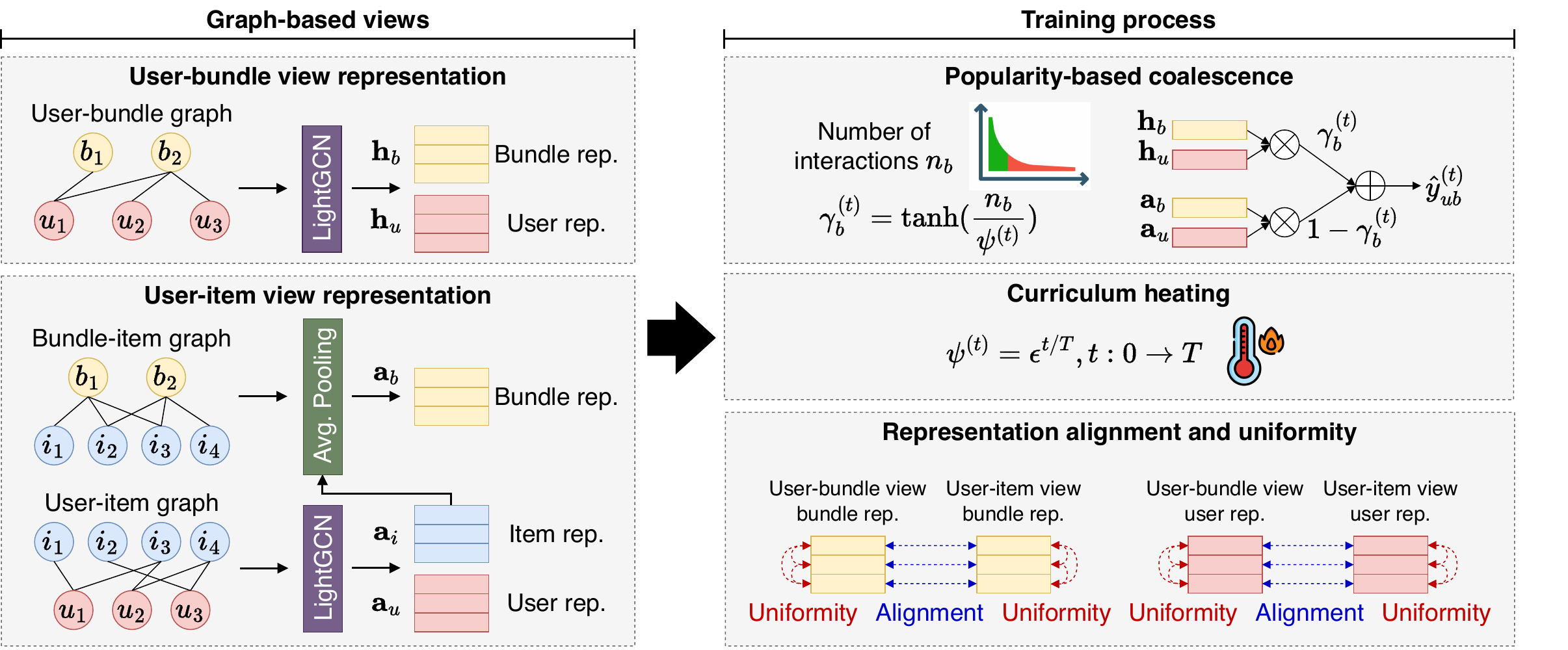}
	\caption{
		Overview of \method (see Section~\ref{sec:proposed} for details).
	}
\label{fig:overview}
\end{figure*}

\subsection{Curriculum Learning}
\label{subsec:curriculum_learning}
Curriculum learning, inspired by human learning, structures training from simpler to more complex tasks, unlike standard approaches that randomize task order~\cite{BengioLCW09,WangCZ22}.
Its effectiveness has been proven in various domains,
including computer vision~\cite{YuIIA20,ZhangWHWWOS21}, natural language processing~\cite{Zhan0WC21,GongLYYCWCNW21}, robotics~\cite{HeGX020,ManelaB22}, and recommender systems~\cite{ChenCWXWXZ21,Chen0FHYZ21}.

In this work, we harness curriculum learning to enhance the learning process of user-bundle relationships.
We initiate with a focus on the more straightforward user-bundle view embeddings and then progressively shift attention to the intricate user-item view embeddings.
This strategy stems from the ease of learning user-bundle view embeddings, which directly capture collaborative signals from historical interactions.
In contrast, user-item view embeddings are more complicated due to their dependence on the representations of affiliated items.

\subsection{Contrastive Learning}
\label{subsec:contrastive_learning}
Contrastive learning aims to derive meaningful embeddings by distinguishing between similar and dissimilar data samples.
This approach has consistently demonstrated superior performance across a range of research fields, including computer vision~\cite{ChenK0H20,0001I20,RobinsonCSJ21}, natural language processing~\cite{YanLWZWX20,GaoYC21}, and recommender systems~\cite{WangYM000M22,caisimple}.
In bundle recommendation, CrossCBR~\cite{MaHZWC22} has utilized InfoNCE~\cite{abs-1807-03748} as a contrastive learning function to regularize embeddings of users and bundles between the user-bundle and user-item views.
However, its approach of aligning the two views without differentiation in prediction can be limiting, especially in cold-start scenarios where the user-bundle view is sparse.

In this work, we enhance the application of contrastive learning in bundle recommendation.
Instead of treating the two views equally, we dynamically adjust their weights based on bundle popularity.
This facilitates the transfer of information from a more informative view to the counterpart, enabling effective recommendations for both cold and warm bundles.
Furthermore, we leverage the alignment and uniformity loss~\cite{0001I20}, which has been demonstrated to outperform InfoNCE in various applications~\cite{0001I20,XiaWCHL22,WangYM000M22}.
This loss directly optimizes the foundational principles of contrastive learning, ensuring more robust and meaningful embeddings.

\section{Proposed Method}
\label{sec:proposed}


\subsection{Overview}
\label{subsec:overview}
We address the following challenges to achieve high performance in cold-start bundle recommendation.
\begin{itemize*}
	\item[C1.] \textbf{Handling highly skewed interactions.}
		Previous works depend overly on user-bundle view representations, which are unreliable if bundles have sparse interactions.
		How can we effectively learn the representations from highly skewed interactions?
    \item[C2.] \textbf{Effectively learning user-item view representations.}
    	 Despite the ample information provided by the user-item view, multiple items in a bundle complicate the learning of these representations.
    	How can we effectively learn the user-item view representations?
    \item[C3.] \textbf{Bridging the gap between two view representations.}
    	Aligning user-bundle and user-item views is crucial, as we estimate future interactions of cold bundles using only their affiliations.
    	How can we effectively reconcile these two view representations?
\end{itemize*}
To address these challenges, we propose \method (\methodfull) with the following main ideas.
\begin{itemize*}
        \item[I1.] \textbf{Popularity-based coalescence.}
        	For the score between users and bundles, we propose the coalescence of two view scores, with less popular bundles relying more on user-item view scores and less on user-bundle view scores.
        \item[I2.] \textbf{Curriculum heating.}
        	We propose a curriculum learning approach that focuses initially on training representations using the user-bundle view, gradually shifting the focus to the user-item view.
        \item[I3.] \textbf{Representation alignment and uniformity.}
			We exploit a representation alignment and uniformity approach to effectively reconcile the user-bundle view and user-item view representations.
\end{itemize*}

Figure~\ref{fig:overview} depicts the schematic illustration of \method.
Given user-bundle interactions, user-item interactions, and bundle-item affiliations, \method forms two graph-based views.
Then, it predicts user-bundle scores by coalescing scores from both views based on bundle popularity.
During training, \method prioritizes user-bundle view initially, transitioning progressively to user-item view via curriculum heating.
\method also exploits alignment and uniformity loss to regularize both views.
%

\subsection{Two Graph-based Views}
\label{subsec:representation}
The objective of bundle recommendation is to estimate the relationship between users and bundles by learning their latent representations.
We utilize graph-based representations of users and bundles to fully exploit the given user-bundle interactions, user-item interactions, and bundle-item affiliations.
We construct user-bundle view and user-item view graphs and use LightGCN~\cite{0001DWLZ020} to obtain embeddings of users and bundles~\cite{MaHZWC22}.

\textbf{User-bundle view representation and score.}
In user-bundle view, we aim to capture the behavior signal between users and bundles.
Specifically, we construct a bipartite graph using user-bundle interactions, and propagate the historical information using a LightGCN.
The $k$-th layer of the LightGCN is computed as follows:
\begin{equation}
\small
\begin{split}
	\mathbf{h}_{u}^{(k)}=\sum_{b\in\mathcal{N}_{u}}\frac{1}{\sqrt{|\mathcal{N}_u|} \sqrt{|\mathcal{N}_b}|} \mathbf{h}_{b}^{(k-1)},
	\mathbf{h}_{b}^{(k)}=\sum_{u\in\mathcal{N}_{b}}\frac{1}{\sqrt{|\mathcal{N}_b|} \sqrt{|\mathcal{N}_u}|} \mathbf{h}_{u}^{(k-1)},
\end{split}
\end{equation}
where $\mathbf{h}_u^{(k)},\mathbf{h}_b^{(k)}\in \mathbb{R}^{d}$ are the embeddings of user $u$ and bundle $b$ at $k$-th layer, respectively;
$\mathcal{N}_{u}$ and $\mathcal{N}_{b}$ are the sets of user $u$'s neighbors and bundle $b$'s neighbors in the user-bundle graph, respectively.
$\mathbf{h}_u^{(0)},\mathbf{h}_b^{(0)}\in \mathbb{R}^{d}$ are randomly initialized before the training of the model.
We obtain the user-bundle view representations of user $u$ and bundle $b$ by aggregating the embeddings from all layers with a weighting approach that places greater emphasis on the lower layers as follows:
\begin{equation}
\small
	\mathbf{h}_u = \sum_{k=0}^{K} \frac{1}{k+1} \mathbf{h}_{u}^{(k)},
	\mathbf{h}_b = \sum_{k=0}^{K} \frac{1}{k+1} \mathbf{h}_{b}^{(k)},
\end{equation}
where $\mathbf{h}_u,\mathbf{h}_b\in \mathbb{R}^{d}$ are the user-bundle view embeddings of user $u$ and bundle $b$, respectively;
$K$ denotes the last layer.
Finally, the user-bundle view score between user $u$ and bundle $b$ is defined as $h_{ub} = \mathbf{h}_u^\top \mathbf{h}_b$.

\textbf{User-item view representation and score.}
In user-item view, we aim to learn the relationship between users and bundles from the perspective of item affiliations.
Specifically, we construct a bipartite graph using user-item interactions, and propagate the historical information using another LightGCN.
Then, we obtain bundle representations by aggregating the affiliated items' representations.
The $k$-th layer of the LightGCN is computed as follows:
\begin{equation}
\small
\begin{split}
	\mathbf{a}_{u}^{(k)}=\sum_{i\in\mathcal{N}'_{u}}\frac{1}{\sqrt{|\mathcal{N}'_u|} \sqrt{|\mathcal{N}_i}|} \mathbf{a}_{i}^{(k-1)},
	\mathbf{a}_{i}^{(k)}=\sum_{u\in\mathcal{N}_{i}}\frac{1}{\sqrt{|\mathcal{N}_i|} \sqrt{|\mathcal{N}'_u}|} \mathbf{a}_{u}^{(k-1)},
\end{split}
\end{equation}
where $\mathbf{a}_u^{(k)},\mathbf{a}_i^{(k)}\in \mathbb{R}^{d}$ are the embeddings of user $u$ and item $i$ at $k$-th layer, respectively;
$\mathcal{N}'_{u}$ and $\mathcal{N}_{i}$ are the sets of user $u$'s neighbors and item $i$'s neighbors in the user-item graph, respectively.
$\mathbf{a}_u^{(0)},\mathbf{a}_i^{(0)}\in \mathbb{R}^{d}$ are randomly initialized before the training.
We obtain the user-item view representations of user $u$ and item $i$ by aggregating the embeddings from all layers with a weighting approach as follows:
\begin{equation}
\small
	\mathbf{a}_u = \sum_{k=0}^{K} \frac{1}{k+1} \mathbf{a}_{u}^{(k)},
	\mathbf{a}_i = \sum_{k=0}^{K} \frac{1}{k+1} \mathbf{a}_{i}^{(k)},
\end{equation}
where $\mathbf{a}_u,\mathbf{a}_i\in \mathbb{R}^{d}$ are the user-item view embeddings of user $u$ and item $i$, respectively;
$K$ indicates the last layer.
We then obtain the user-item view representations of bundle $b$ by an average pooling as $\mathbf{a}_b = \frac{1}{|\mathcal{N}'_b|} \sum_{i\in\mathcal{N}'_b} \mathbf{a}_i,$ where $\mathcal{N}'_{b}$ is the set of bundle $b$'s affiliated items.
Finally, the user-item view score between user $u$ and bundle $b$ is defined as $a_{ub} = \mathbf{a}_u^\top \mathbf{a}_b$.
\vspace{-1em}

\subsection{Popularity-based Coalescence}
\label{subsec:popularity_based_cf}
For recommending bundles to users, our objective is to estimate the final score $\hat{y}_{ub}\in \mathbb{R}$ between user $u$ and bundle $b$ using scores $h_{ub}$ and $a_{ub}$, derived from the two distinct views.
However, real-world datasets present an inherent challenge of handling the extremely skewed distribution of interactions between users and bundles, as illustrated in Figure~\ref{fig:pop_bias}.
While both views are informative, many unpopular bundles are underrepresented in the user-bundle view due to the insufficient interactions as illustrated in Figure~\ref{fig:insufficiency}.
In contrast, they are often sufficiently represented in the user-item view, as depicted in Figure~\ref{fig:sufficiency}.
A uniform weighting strategy for both views, as in CrossCBR, risks amplifying biases inherent to the user-bundle view, especially for the unpopular bundles.
This predicament is further exacerbated for cold-start bundles devoid of interactions in user-bundle view.

To deal with this challenge, we propose two desired properties for the user-bundle relationship score $\hat{y}_{ub}$.
\begin{property}[User-bundle view influence mitigation]
\label{prop:behavior}
The influence of user-bundle view score should be mitigated as a bundle's interaction number decreases,
\ie $\frac{\partial \hat{y}_{ub}}{\partial h_{ub}} < \frac{\partial \hat{y}_{ub'}}{\partial h_{ub'}}$ if $n_b < n_{b'}$ where $n_b$ is the number of user interactions of bundle $b$.
\end{property}

\begin{property}[User-item view influence amplification]
\label{prop:affiliation}
The influence of user-item view score should be amplified as a bundle's interaction number decreases,
\ie $\frac{\partial \hat{y}_{ub}}{\partial a_{ub}} > \frac{\partial \hat{y}_{ub'}}{\partial a_{ub'}}$ if $n_b < n_{b'}$ where $n_b$ is the number of user interactions of bundle $b$.
\end{property}
\noindent Properties~\ref{prop:behavior} and~\ref{prop:affiliation} are crucial in achieving a balanced interplay between the user-bundle view and user-item view scores based on bundle popularities.
Specifically, they ensure a heightened emphasis on the user-item view over the user-bundle view for less popular bundles.

We propose the user-bundle relationship score $\hat{y}_{ub}$ that satisfies the two desired properties by weighting the two scores $h_{ub}$ and $a_{ub}$ based on bundle popularities as follows:
\begin{equation}
\label{eq:prediction1}
\small
	\hat{y}_{ub} = \gamma_b h_{ub} + (1-\gamma_b) a_{ub},
\end{equation}
where $\gamma_b\in [0, 1]$, which is defined in the next subsection, denotes a weighting coefficient such that $\gamma_b > \gamma_{b'}$ if $n_b > n_{b'}$.
A smaller value of $\gamma_b$ (\ie a smaller value of $n_b$) ensures that the score $\hat{y}_{ub}$ is predominantly influenced by the user-item view score $a_{ub}$.
We show in Lemmas~\ref{lemma:1} and~\ref{lemma:2} that Equation~\eqref{eq:prediction1} satisfies all the desired properties.

\begin{lemma}
\label{lemma:1}
Equation~\eqref{eq:prediction1} satisfies Property~\ref{prop:behavior}.
\end{lemma}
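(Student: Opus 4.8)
The plan is to establish Property~\ref{prop:behavior} directly from the closed form in Equation~\eqref{eq:prediction1} by computing the relevant partial derivative and then using the stated monotonicity of $\gamma_b$ in the interaction count $n_b$. First I would differentiate $\hat{y}_{ub} = \gamma_b h_{ub} + (1-\gamma_b) a_{ub}$ with respect to $h_{ub}$. Since $\gamma_b$ depends only on $b$ (specifically on the interaction count $n_b$) and not on the view scores themselves, the derivative is simply $\frac{\partial \hat{y}_{ub}}{\partial h_{ub}} = \gamma_b$. Likewise $\frac{\partial \hat{y}_{ub'}}{\partial h_{ub'}} = \gamma_{b'}$.

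Next I would invoke the defining property of the weighting coefficient stated just before the lemma: $\gamma_b > \gamma_{b'}$ whenever $n_b > n_{b'}$, equivalently $\gamma_b < \gamma_{b'}$ whenever $n_b < n_{b'}$. Chaining this with the derivative computation gives, for $n_b < n_{b'}$,
\[
\frac{\partial \hat{y}_{ub}}{\partial h_{ub}} = \gamma_b < \gamma_{b'} = \frac{\partial \hat{y}_{ub'}}{\partial h_{ub'}},
\]
which is precisely the inequality asserted in Property~\ref{prop:behavior}. That completes the argument.

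There is essentially no serious obstacle here; the lemma is a one-line consequence of the construction, and the only thing to be careful about is the logical direction of the monotonicity (the coefficient $\gamma_b$ is \emph{increasing} in $n_b$, so a smaller interaction count yields a smaller derivative, i.e.\ a mitigated user-bundle view influence). The "hard part," to the extent there is one, is purely bookkeeping: making sure that $\gamma_b$ is treated as a constant with respect to the view score $h_{ub}$ when taking the partial derivative, which is justified because $\gamma_b$ is a function of the fixed popularity statistic $n_b$ only and carries no dependence on the learned scores. The same template will then apply to Lemma~\ref{lemma:2}, where differentiating with respect to $a_{ub}$ yields $1-\gamma_b$ and the reversed monotonicity gives the amplification property.
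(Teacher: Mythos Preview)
Your proposal is correct and mirrors the paper's own proof essentially line for line: compute $\frac{\partial \hat{y}_{ub}}{\partial h_{ub}} = \gamma_b$ and invoke the assumed monotonicity $\gamma_b < \gamma_{b'}$ when $n_b < n_{b'}$. The additional remarks you make about treating $\gamma_b$ as constant in $h_{ub}$ are sound but more detailed than the paper bothers with.
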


\begin{proof}
\small
$\frac{\partial \hat{y}_{ub}}{\partial h_{ub}} = \gamma_b$.
Thus, $\frac{\partial \hat{y}_{ub}}{\partial h_{ub}} < \frac{\partial \hat{y}_{ub'}}{\partial h_{ub'}}$ if $n_b < n_{b'}$ because $\gamma_b < \gamma_{b'}$.
\end{proof}

\begin{lemma}
\label{lemma:2}
Equation~\eqref{eq:prediction1} satisfies Property~\ref{prop:affiliation}.
\end{lemma}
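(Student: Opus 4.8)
The plan is to mirror the short computation used for Lemma~\ref{lemma:1}, since the only difference is which of the two coefficients we differentiate against. First I would compute the partial derivative of $\hat{y}_{ub}$ from Equation~\eqref{eq:prediction1} with respect to $a_{ub}$. Treating $\gamma_b$ as a constant depending only on the bundle $b$ (through $n_b$) and not on the view scores, the term $\gamma_b h_{ub}$ contributes nothing and the term $(1-\gamma_b) a_{ub}$ contributes its coefficient, so $\frac{\partial \hat{y}_{ub}}{\partial a_{ub}} = 1 - \gamma_b$.

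Next I would invoke the monotonicity assumption on $\gamma_b$ stated right after Equation~\eqref{eq:prediction1}, namely that $\gamma_b > \gamma_{b'}$ whenever $n_b > n_{b'}$ (equivalently $\gamma_b < \gamma_{b'}$ whenever $n_b < n_{b'}$). Applying the order-reversing map $x \mapsto 1-x$ to the inequality $\gamma_b < \gamma_{b'}$ yields $1-\gamma_b > 1-\gamma_{b'}$, i.e. $\frac{\partial \hat{y}_{ub}}{\partial a_{ub}} > \frac{\partial \hat{y}_{ub'}}{\partial a_{ub'}}$ whenever $n_b < n_{b'}$, which is exactly Property~\ref{prop:affiliation}.

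There is essentially no obstacle here: the argument is a one-line derivative followed by the fact that $x \mapsto 1-x$ is strictly decreasing. The only point worth stating carefully is that $\gamma_b \in [0,1]$ is defined purely as a function of the bundle's popularity $n_b$ and is independent of the view scores $h_{ub}$ and $a_{ub}$, so it may be treated as a constant when taking the partial derivative with respect to $a_{ub}$; this is already guaranteed by its definition in the following subsection, which I would assume. I would keep the proof to two lines, parallel in structure to the proof of Lemma~\ref{lemma:1}.
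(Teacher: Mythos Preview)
Your proposal is correct and matches the paper's own proof essentially line for line: compute $\frac{\partial \hat{y}_{ub}}{\partial a_{ub}} = 1-\gamma_b$, then use $n_b < n_{b'} \Rightarrow \gamma_b < \gamma_{b'} \Rightarrow 1-\gamma_b > 1-\gamma_{b'}$.
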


\begin{proof}
$\frac{\partial \hat{y}_{ub}}{\partial a_{ub}} = 1-\gamma_b$.
Thus, $\frac{\partial \hat{y}_{ub}}{\partial a_{ub}} > \frac{\partial \hat{y}_{ub'}}{\partial a_{ub'}}$ if $n_b < n_{b'}$ because $1-\gamma_b > 1-\gamma_{b'}$.
\end{proof}

%
%
%

\subsection{Curriculum Heating}
\label{subsec:curriculum_heating}
Despite the ample information provided by the user-item view, multiple items in a bundle complicate the learning of user-item representations.
This difficulty arises because accurate representation of a bundle necessitates well-represented embeddings of its all affiliated items, and each item further requires well-represented embeddings of the connected users.
On the other side, the user-bundle view representation is relatively straightforward to learn.
This simplicity arises because we encapsulate each bundle's historical characteristics into a single embedding rather than understanding the intricate composition of the bundle.

Hence, we propose to focus initially on learning easier view representations and gradually shift the focus to learning harder view representations.
Thus, we modify Equation~\eqref{eq:prediction1} by exploiting a curriculum learning approach that focuses initially on training user-bundle view representations, and gradually shifts the focus to the user-item view representations as follows:
\begin{equation}
\label{eq:prediction2}
\small
	\hat{y}_{ub}^{(t)} = \gamma_b^{(t)} h_{ub} + (1-\gamma_b^{(t)}) a_{ub},
\end{equation}
where $\hat{y}_{ub}^{(t)} \in \mathbb{R}$ is the estimated relationship score between user $u$ and bundle $b$ at epoch $t$.
$\gamma_b^{(t)} \in \mathbb{R}$ is defined as $\gamma_b^{(t)} = \tanh(\frac{n_b}{\psi^{(t)}})$,
where $n_b$ is the number of interactions of bundle $b$, and $\psi^{(t)} > 0$ is the temperature at epoch $t$.
Note that $\gamma_{b}^{(t)}$ lies within the interval $[0, 1]$ because $\frac{n_b}{\psi^{(t)}} \geq 0$.
Then, we incrementally raise the temperature $\psi^{(t)}$ up to the maximum temperature as follows:
\begin{equation}
\label{eq:CH}
\small
	\psi^{(t)} = \epsilon^{t/T}, t:0 \rightarrow T,
\end{equation}
where $t,T\in \mathbb{R}$ are the current and the maximum epochs of the training process, and $\epsilon > 1$ is the hyperparameter of the maximum temperature.
In the initial epochs of training, $\gamma_b^{(t)}$ is large since $t$ is small.
As a result, the score $\hat{y}_{ub}^{(t)}$ relies more heavily on $h_{ub}$ than $a_{ub}$.
However, as the training progresses and $t$ increases, $\gamma_b^{(t)}$ diminishes, shifting the emphasis from $h_{ub}$ to $a_{ub}$.
This heating mechanism is applied to all bundles regardless of their popularity.
Furthermore, we show in Lemmas~\ref{lemma:3} and~\ref{lemma:4} that Equation~\eqref{eq:prediction2} still satisfies the two desired properties.

\begin{lemma}
\label{lemma:3}
Equation~\eqref{eq:prediction2} satisfies Property~\ref{prop:behavior}.
\end{lemma}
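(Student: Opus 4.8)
The plan is to reduce the claim to the monotonicity of a single-variable function, exactly as in the proof of Lemma~\ref{lemma:1}, but now carrying along the epoch index $t$. First I would differentiate Equation~\eqref{eq:prediction2} with respect to $h_{ub}$: since $\gamma_b^{(t)}$ depends only on $n_b$ and on the temperature $\psi^{(t)}$, and not on the scores $h_{ub}$ or $a_{ub}$, we obtain $\frac{\partial \hat{y}_{ub}^{(t)}}{\partial h_{ub}} = \gamma_b^{(t)} = \tanh\bigl(n_b/\psi^{(t)}\bigr)$.

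Next I would fix the training epoch $t$, so that $\psi^{(t)}$ is a common positive constant shared by all bundles; positivity is immediate from Equation~\eqref{eq:CH}, since $\psi^{(t)} = \epsilon^{t/T}$ with $\epsilon > 1$ gives $\psi^{(t)} \ge 1 > 0$. Then the map $n \mapsto \tanh(n/\psi^{(t)})$ is strictly increasing on $[0,\infty)$, being the composition of the strictly increasing map $n \mapsto n/\psi^{(t)}$ with the strictly increasing function $\tanh$. Hence, if $n_b < n_{b'}$, then $\gamma_b^{(t)} = \tanh(n_b/\psi^{(t)}) < \tanh(n_{b'}/\psi^{(t)}) = \gamma_{b'}^{(t)}$, which is exactly $\frac{\partial \hat{y}_{ub}^{(t)}}{\partial h_{ub}} < \frac{\partial \hat{y}_{ub'}^{(t)}}{\partial h_{ub'}}$, \ie Property~\ref{prop:behavior}.

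There is no genuine obstacle here; the only subtlety worth stating explicitly is that the inequality in Property~\ref{prop:behavior} compares two bundles at the same epoch, so $\psi^{(t)}$ is identical in both expressions and drops out of the comparison --- making the argument structurally identical to Lemma~\ref{lemma:1}, with the constant weight $\gamma_b$ replaced by the epoch-dependent weight $\gamma_b^{(t)}$. I would also note that the companion Lemma~\ref{lemma:4} follows the same template: differentiating with respect to $a_{ub}$ yields $1-\gamma_b^{(t)}$, and the monotonicity of $\gamma_b^{(t)}$ in $n_b$ then gives the reversed inequality required by Property~\ref{prop:affiliation}.
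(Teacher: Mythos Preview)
Your proposal is correct and follows essentially the same route as the paper: compute $\frac{\partial \hat{y}_{ub}^{(t)}}{\partial h_{ub}} = \tanh(n_b/\psi^{(t)})$, observe that $\psi^{(t)}$ is common to all bundles at a fixed epoch, and conclude from the monotonicity of $\tanh$. Your version is simply more explicit about why $\psi^{(t)}>0$ and why the epoch-wise comparison makes $\psi^{(t)}$ drop out, but the argument is structurally identical to the paper's.
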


\begin{proof}
$\frac{\partial \hat{y}^{(t)}_{ub}}{\partial h_{ub}} = \tanh(\frac{n_b}{\psi^{(t)}})$.
Thus, $\frac{\partial \hat{y}^{(t)}_{ub}}{\partial h_{ub}} < \frac{\partial \hat{y}^{(t)}_{ub'}}{\partial h_{ub'}}$ if $n_b < n_{b'}$ because $\psi^{(t)}$ is the same for all bundles at epoch $t$ and $tanh(\cdot)$ is an increasing function.
\end{proof}

\begin{lemma}
\label{lemma:4}
Equation~\eqref{eq:prediction2} satisfies Property~\ref{prop:affiliation}.
\end{lemma}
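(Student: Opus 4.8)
The plan is to mirror the one-line arguments used for Lemma~\ref{lemma:2} and Lemma~\ref{lemma:3}, now differentiating Equation~\eqref{eq:prediction2} with respect to the user-item view score $a_{ub}$. First I would note that $\hat{y}_{ub}^{(t)}$ is affine in $a_{ub}$ through the single term $(1-\gamma_b^{(t)})a_{ub}$, treating $h_{ub}$ and $a_{ub}$ as independent coordinates, so that $\partial \hat{y}_{ub}^{(t)}/\partial a_{ub} = 1-\gamma_b^{(t)} = 1-\tanh(n_b/\psi^{(t)})$; the $\gamma_b^{(t)} h_{ub}$ term contributes nothing to this partial derivative.

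Next I would fix the epoch $t$ and compare two bundles $b, b'$ with $n_b < n_{b'}$. Since the temperature $\psi^{(t)} = \epsilon^{t/T} > 0$ from Equation~\eqref{eq:CH} depends only on $t$ and not on the bundle, we have $n_b/\psi^{(t)} < n_{b'}/\psi^{(t)}$. Because $\tanh(\cdot)$ is strictly increasing on $\mathbb{R}$, this gives $\tanh(n_b/\psi^{(t)}) < \tanh(n_{b'}/\psi^{(t)})$, hence $1-\tanh(n_b/\psi^{(t)}) > 1-\tanh(n_{b'}/\psi^{(t)})$, i.e. $\partial \hat{y}_{ub}^{(t)}/\partial a_{ub} > \partial \hat{y}_{ub'}^{(t)}/\partial a_{ub'}$, which is exactly Property~\ref{prop:affiliation}.

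There is essentially no real obstacle: the statement follows immediately from linearity of Equation~\eqref{eq:prediction2} in $a_{ub}$ together with monotonicity of $\tanh$. The only points requiring a little care are that the inequality must be read at a \emph{fixed} epoch $t$, so that the common factor $\psi^{(t)}$ drops out of the comparison, and that $\gamma_b^{(t)} \in [0,1]$ (since $n_b/\psi^{(t)} \geq 0$), which keeps the coefficient $1-\gamma_b^{(t)}$ well defined and non-negative and consistent with the convex-combination reading of the coalescence. One may optionally remark that the strictness of the $\tanh$ monotonicity yields the strict inequality demanded by Property~\ref{prop:affiliation}.
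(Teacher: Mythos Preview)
Your proposal is correct and follows essentially the same approach as the paper's own proof: compute $\partial \hat{y}_{ub}^{(t)}/\partial a_{ub} = 1-\tanh(n_b/\psi^{(t)})$ and then use that $\psi^{(t)}$ is common to all bundles at a fixed epoch together with the monotonicity of $\tanh$ (equivalently, that $1-\tanh(\cdot)$ is decreasing) to obtain the required strict inequality. Your additional remarks on strictness and on $\gamma_b^{(t)}\in[0,1]$ are sound but not needed beyond what the paper records.
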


\begin{proof}
$\frac{\partial \hat{y}^{(t)}_{ub}}{\partial a_{ub}} = 1-\tanh(\frac{n_b}{\psi^{(t)}})$.
Thus, $\frac{\partial \hat{y}^{(t)}_{ub}}{\partial a_{ub}} > \frac{\partial \hat{y}^{(t)}_{ub'}}{\partial a_{ub'}}$ if $n_b < n_{b'}$ because $\psi^{(t)}$ is the same for all bundles at epoch $t$ and $1-tanh(\cdot)$ is a decreasing function.
\end{proof}

\begin{savenotes}
\begin{table*}
\small
\setlength\tabcolsep{8pt}
\centering
\begin{threeparttable}
	\caption{Summary of three real-world datasets where ``dens." denotes the density of a matrix.}
	\begin{tabular}{l|rrrrrrr}
		\toprule
		\textbf{Dataset} & \textbf{Users} & \textbf{Bundles} & \textbf{Items} & \textbf{User-bundle (dens.)} & \textbf{User-item (dens.)} & \textbf{Bundle-item (dens.)} & \textbf{Avg. size of bundle}\\
		\midrule
		Youshu\footnotemark[1] & 8,039 & 4,771 & 32,770 & 51,377 (0.13\%) & 138,515 (0.05\%) & 176,667 (0.11\%) & 37.03\\
		NetEase\footnotemark[1] & 18,528 & 22,864 & 123,628 & 302,303 (0.07\%) & 1,128,065 (0.05\%) & 1,778,838 (0.06\%) & 77.80\\
		iFashion\footnotemark[1] & 53,897 & 27,694 & 42,563 & 1,679,708 (0.11\%) & 2,290,645 (0.10\%) & 106,916 (0.01\%) & 3.86\\
		\bottomrule
	\end{tabular}
	\label{table:datasets}
	\begin{tablenotes} \footnotesize
	\item[1]\url{https://github.com/mysbupt/CrossCBR}
	\end{tablenotes}
\end{threeparttable}
\end{table*}
\end{savenotes}


\vspace{-1em}
\subsection{Representation Alignment and Uniformity}
\label{subsec:alignment_uniformity}
While the user-bundle view and user-item view are crafted to capture distinct representations, aligning the two views is essential, especially when predicting future interactions of cold bundles solely based on user-item view representations.
Moreover, aligning two views facilitates knowledge transfer between the two views.
This is essential because we gradually change the learning focus of views by curriculum heating, and the alignment helps the success of curriculum heating by effectively transferring knowledge from a view with richer knowledge to the opposite view.
To achieve this, we exploit a contrastive learning-based approach that reconciles the two views.
Specifically, we use the alignment and uniformity loss~\cite{0001I20} as a regularization for the representations of the two views.
We firstly $l_2$-normalize the embeddings of the two views as follows:
\begin{equation}
\label{eq:AU}
\small
	\tilde{\mathbf{h}}_u = \frac{\mathbf{h}_u}{\|\mathbf{h}_u\|_2}, \tilde{\mathbf{a}}_u = \frac{\mathbf{a}_u}{\|\mathbf{a}_u\|_2}, \tilde{\mathbf{h}}_b = \frac{\mathbf{h}_b}{\|\mathbf{h}_b\|_2}, \tilde{\mathbf{a}}_b = \frac{\mathbf{a}_b}{\|\mathbf{a}_b\|_2},
\end{equation}
where $\mathbf{h}_u,\mathbf{h}_b\in \mathbb{R}^{d}$ are user-bundle view representations of user $u$ and bundle $b$, respectively;
$\mathbf{a}_u,\mathbf{a}_b\in \mathbb{R}^{d}$ are user-item view representations of user $u$ and bundle $b$, respectively.
Then, we define an alignment loss as follows:
\begin{equation}
\small
	l_{align} = \mathop{\mathbb{E}}_{u\sim p_{user}} \|\tilde{\mathbf{h}}_u - \tilde{\mathbf{a}}_u\|_2^2 + \mathop{\mathbb{E}}_{b\sim p_{bundle}} \|\tilde{\mathbf{h}}_b - \tilde{\mathbf{a}}_b\|_2^2,
\end{equation}
where $p_{user}$ and $p_{bundle}$ are the distributions of users and bundles, respectively.
The alignment loss makes the embeddings of the two views close to each other for each user and bundle.
We also define a uniformity loss as follows:
\begin{equation}
\small
\begin{split}
	l_{uniform}
	&= \log \mathop{\mathbb{E}}_{u,u'\sim p_{user}} e^{-2\|\tilde{\mathbf{h}}_u - \tilde{\mathbf{h}}_{u'}\|_2^2}
	+ \log \mathop{\mathbb{E}}_{u,u'\sim p_{user}} e^{-2\|\tilde{\mathbf{a}}_u - \tilde{\mathbf{a}}_{u'}\|_2^2}\\
	&+ \log \mathop{\mathbb{E}}_{b,b'\sim p_{bundle}} e^{-2\|\tilde{\mathbf{h}}_b - \tilde{\mathbf{h}}_{b'}\|_2^2}
	+ \log \mathop{\mathbb{E}}_{b,b'\sim p_{bundle}} e^{-2\|\tilde{\mathbf{a}}_b - \tilde{\mathbf{a}}_{b'}\|_2^2},
\end{split}
\end{equation}
where $u'$ and $b'$ denote a user and a bundle distinct from $u$ and $b$, respectively.
The uniformity loss ensures distinct representations for different users (or bundles) by scattering them across the space.
Finally, we define the contrastive loss for the two views as follows:
\begin{equation}
\label{eq:au}
\small
	\mathcal{L}_{AU} = l_{align} + l_{uniform}.
\end{equation}

\subsection{Objective Function and Training}
\label{subsec:objective_function}
To effectively learn the user-bundle relationship, we utilize Bayesian Personalize Ranking (BPR) loss~\cite{RendleFGS09}, which is the most widely used loss owing to its powerfulness, as follows:
\begin{equation}
\label{eq:bpr}
\small
	\mathcal{L}_{BPR}^{(t)} = \mathop{\mathbb{E}}_{(u,b^+,b^-)\sim p_{data}} -\ln\sigma(\hat{y}_{ub^+}^{(t)} - \hat{y}_{ub^-}^{(t)}),
\end{equation}
where $p_{data}$ is the data distribution of user-bundle interactions, with $u$ denoting a user, $b^+$ indicating a positive bundle, and $b^-$ representing a negative bundle.
We define the final objective function as follows:
\begin{equation}
\label{eq:finalLoss}
\small
	\mathcal{L}^{(t)} = \mathcal{L}_{BPR}^{(t)} + \lambda_1 \mathcal{L}_{AU} + \lambda_2 \|\Theta\|_2,
\end{equation}
where $\lambda_1,\lambda_2\in \mathbb{R}$ are balancing hyperparameters for the terms, and $\Theta$ denotes trainable parameters of \method.
For the distributions $p_{user}$ and $p_{bundle}$, we use in-batch sampling which selects samples from the training batch of $p_{data}$ rather than the entire dataset.
This approach has empirically demonstrated to mitigate the training bias in prior studies~\cite{ZhouMZZY21,WangYM000M22}.
All the parameters are optimized in an end-to-end manner through the optimization.
We also adopt an edge dropout~\cite{WuWF0CLX21,MaHZWC22} while training to enhance the performance robustness.

\begin{figure}[t]
	\centering
	\includegraphics[width=1.0\linewidth]{./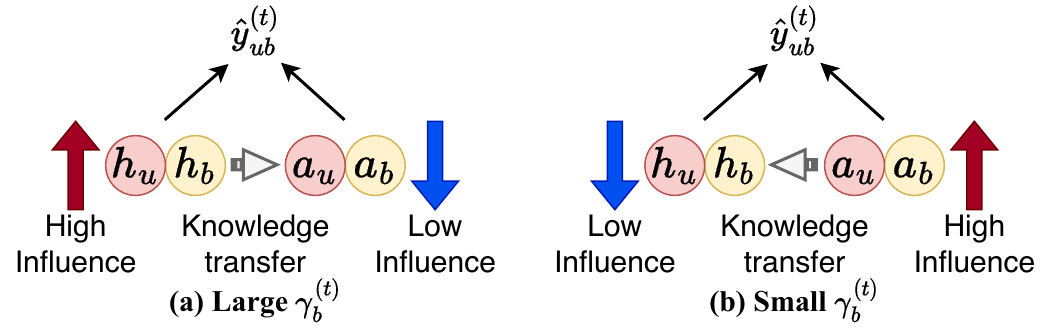}
	\caption{
		Learning mechanism of \method (see Section~\ref{subsec:discussion} for details).
	}
\label{fig:discussion}
\end{figure}

\subsection{Discussion of \method}
\label{subsec:discussion}

The core of \method lies in its ability to dynamically adjust the weights of two distinct views, setting it apart from previous methods such as CrossCBR~\cite{MaHZWC22}.
This dynamic adjustment is pivotal for achieving superior performance in the cold-start bundle recommendation.

Through the popularity-based coalescence, \method dynamically adjusts the weight $\gamma_b^{(t)}$ in Equation~\eqref{eq:prediction2} to effectively harness the more informative view.
For instance, when a bundle $b$ is popular, the influence of user-bundle view is bolstered with a large $\gamma_b^{(t)}$ because the bundle is rich of knowledge in this view. 
The knowledge then gets transferred to the user-item view by the alignment and uniformity loss, as depicted in Figure~\ref{fig:discussion} (a).
Conversely, for a less popular bundle, the influence of user-item view is amplified with a small $\gamma_b^{(t)}$, transferring the learned knowledge to the user-bundle view, as shown in Figure~\ref{fig:discussion} (b).
This strategy contrasts with CrossCBR, which may inadvertently amplify the underrepresented knowledge of unpopular bundles due to its uniform weighting strategy.

Additionally, the curriculum heating of \method further adjusts the weight $\gamma_b^{(t)}$ throughout the learning process.
As the epochs progress, $\gamma_b^{(t)}$ diminishes (transitioning from Figure~\ref{fig:discussion} (a) to Figure~\ref{fig:discussion} (b)), ensuring both views are thoroughly utilized during training.
This dynamic exchange of knowledge between two views results in \method's superior performance in both cold and warm settings, owing to the wealth of knowledge each view offers.
This strategy is distinct from CrossCBR since the uniform weights for both views may lead to suboptimal results, especially in cold-start scenarios where the user-bundle view is sparse.
Moreover, the curriculum heating strategically focuses on the easier view first, gradually shifting its attention to the more challenging view as the learning progresses.
This helps a smoother and more effective knowledge transfer between the views.

We analyze the space and time complexities of \method in Appendix~\ref{app:complexity}.

\section{Experiments}
\label{sec:experiments}
\begin{table*}[t]
\small
\setlength\tabcolsep{2.7pt}
\centering
\caption{Performance comparison of \method and baseline cold-start methods on three real-world datasets.
}
\begin{tabular}{l|ccc|ccc|ccc|ccc|ccc|ccc}
	\toprule
	\multirow{3}{*}{\textbf{Model}} & \multicolumn{6}{c|}{\textbf{Youshu}} & \multicolumn{6}{c|}{\textbf{NetEase}} & \multicolumn{6}{c}{\textbf{iFashion}}\\
	& \multicolumn{3}{c|}{Recall$@20$} & \multicolumn{3}{c|}{nDCG$@20$} & \multicolumn{3}{c|}{Recall$@20$} & \multicolumn{3}{c|}{nDCG$@20$} & \multicolumn{3}{c|}{Recall$@20$} & \multicolumn{3}{c}{nDCG$@20$}\\
	& \emph{Cold} & \emph{Warm} & \emph{All} & \emph{Cold} & \emph{Warm} & \emph{All} & \emph{Cold} & \emph{Warm} & \emph{All} & \emph{Cold} & \emph{Warm} & \emph{All} & \emph{Cold} & \emph{Warm} & \emph{All} & \emph{Cold} & \emph{Warm} & \emph{All}\\
	\midrule
	DropoutNet~\cite{VolkovsYP17}
		& .0022 & .0336 & .0148 & .0007 & .0153 & .0055
		& .0028 & .0154 & .0046 & .0015 & .0078 & .0024
		& .0009 & .0060 & .0039 & .0008 & .0045 & .0027\\
	CB2CF~\cite{BarkanKYK19}
		& .0012 & .0258 & .0028 & .0007 & .0208 & .0021
		& .0016 & .0049 & .0027 & .0006 & .0027 & .0014
		& .0009 & .0057 & .0066 & .0006 & .0043 & .0048\\
	Heater~\cite{ZhuSSC20}
		& .0016 & .1753 & .0541 & .0007 & .0826 & .0286
		& .0021 & .0125 & .0102 & .0010 & .0064 & .0054
		& .0015 & .0217 & .0123 & .0010 & .0151 & .0083\\
	GAR-CF~\cite{ChenWHHXLH022}
		& .0015 & .1688 & .0529 & .0011 & .0726 & .0317
		& .0010 & .0063 & .0014 & .0005 & .0035 & .0008
		& .0013 & .0203 & .0090 & .0013 & .0143 & .0055\\
	GAR-GNN~\cite{ChenWHHXLH022}
		& .0013 & .0835 & .0358 & .0006 & .0569 & .0178
		& .0009 & .0056 & .0027 & .0003 & .0030 & .0012
		& \underline{.0065} & .0172 & .0126 & .0030 & .0107 & .0087\\
	CVAR~\cite{ZhaoRDZW22}
		& .0008 & \underline{.1958} & \underline{.0829} & .0002 & \underline{.1112} & \underline{.0533}
		& .0002 & .0308 & .0156 & .0001 & .0154 & .0084
		& .0007 & .0220 & .0125 & .0004 & .0152 & .0084\\
	CLCRec~\cite{WeiWLNLLC21}
		& \underline{.0137} & .0626 & .0367 & \underline{.0087} & .0317 & .0194
		& \underline{.0136} & \underline{.0407} & \underline{.0259} & \underline{.0075} & \underline{.0215} & \underline{.0138}
		& .0053 & .0203 & .0126 & \underline{.0043} & .0135 & .0085\\
	CCFCRec~\cite{ZhouZY23}
		& .0044 & .1554 & .0702 & .0022 & .0798 & .0425
		& .0007 & .0265 & .0130 & .0004 & .0128 & .0068
		& .0005 & \underline{.0439} & \underline{.0252} & .0003 & \underline{.0304} & \underline{.0172}\\
	\midrule
	\textbf{\method (ours)}
		& \textbf{.0183} & \textbf{.2804} & \textbf{.1247} & \textbf{.0105} & \textbf{.1646} & \textbf{.0833}
		& \textbf{.0191} & \textbf{.0847} & \textbf{.0453} & \textbf{.0093} & \textbf{.0455} & \textbf{.0264}
		& \textbf{.0170} & \textbf{.1156} & \textbf{.0658} & \textbf{.0096} & \textbf{.0876} & \textbf{.0504}\\
	\bottomrule
\end{tabular}
\label{table:cold_performance}
\end{table*}

We perform experiments to answer the following questions.
\begin{itemize}
        \item[Q1.] \textbf{Comparison with cold-start methods.}
                Does \method show superior performance in comparison to other cold-start methods in bundle recommendation?
        \item[Q2.] \textbf{Comparison with warm-start methods.}
                Does \method show similar performance in warm-start bundle recommendation compared with baselines, although \method is a cold-start bundle recommendation method?
        \item[Q3.] \textbf{Comparison by cold bundle ratio.}
        		Does the performance difference between \method and baseline increase as the cold bundle ratio increases?
        		
        \item[Q4.] \textbf{Ablation study.}
                How do the main ideas of \method affect the performance?
        \item[Q5.] \textbf{Effect of the maximum temperature.} How does the maximum temperature $\epsilon$, the critical hyperparameter, affect the performance of \method?
\end{itemize}

\subsection{Experimental Setup}
\label{subsec:experimental_setup}

\textbf{Datasets.}
We use three real-world bundle recommendation datasets as summarized in Table~\ref{table:datasets}.
Youshu~\cite{ChenL0GZ19} comprises bundles of books sourced from a book review site;
NetEase~\cite{CaoN0WZC17} features bundles of music tracks from a cloud music service;
iFashion~\cite{ChenHXGGSLPZZ19} consists of bundles of fashion items from an outfit sales platform.

\textbf{Baseline cold-start methods.}
We compare \method with existing cold-start item recommendation methods because they can be easily adapted for bundle recommendation by considering bundle-item affiliations as content information.
DropoutNet~\cite{VolkovsYP17} is a robustness-based method with a dropout operation.
CB2CF~\cite{BarkanKYK19} and Heater~\cite{ZhuSSC20} are constraint-based methods that regularize the alignment.
GAR~\cite{ChenWHHXLH022} is a generative method with two variants GAR-CF and GAR-GNN.
CVAR~\cite{ZhaoRDZW22} is another generative method with a conditional decoder.
CLCRec~\cite{WeiWLNLLC21} and CCFCRec~\cite{ZhouZY23} are contrastive learning-based methods.
We omit other competitors such as DUIF~\cite{GengZBC15}, MTPR~\cite{Du00LTC20}, and NFM~\cite{0001C17} because CLCRec and CCFCRec outperform them by a large margin on their extensive experiments.
We also omit other generative competitors such as DeepMusic~\cite{OordDS13} and MetaEmb~\cite{PanLATH19} because GAR~\cite{ChenWHHXLH022} greatly outperforms them on its experiments.
We use bundle-item multi-hot vectors as their content information.

\begin{table}[t]
\small
\setlength\tabcolsep{4.2pt}
\centering
\caption{Performance comparison of \method and baseline warm-start methods on three real-world datasets.
%
}
\begin{tabular}{l|cc|cc|cc}
	\toprule
	\multirow{3}{*}{\textbf{Model}} & \multicolumn{2}{c|}{\textbf{Youshu}} & \multicolumn{2}{c|}{\textbf{NetEase}} & \multicolumn{2}{c}{\textbf{iFashion}}\\
	& Recall & nDCG & Recall & nDCG & Recall & nDCG\\
	& $@20$ & $@20$ & $@20$ & $@20$ & $@20$ & $@20$\\
	\midrule
	MFBPR~\cite{RendleFGS09}
		& .1959 & .1117
		& .0355 & .0181
		& .0752 & .0542\\
	LightGCN~\cite{0001DWLZ020}
		& .2286 & .1344
		& .0496 & .0254
		& .0837 & .0612\\
	SGL~\cite{WuWF0CLX21}
		& .2568 & .1527
		& .0687 & .0368
		& .0933 & .0690\\
	SimGCL~\cite{YuY00CN22}
		& .2691 & .1593
		& .0710 & .0377
		& .0919 & .0677\\
	LightGCL~\cite{caisimple}
		& .2712 & .1607
		& .0722 & .0388
		& .0943 & .0686\\
	\midrule
	DAM~\cite{ChenL0GZ19}
		& .2082 & .1198
		& .0411 & .0210
		& .0629 & .0450\\
	BundleNet~\cite{DengWZZWTFC20}
		& .1895 & .1125
		& .0391 & .0201
		& .0626 & .0447\\
	BGCN~\cite{ChangG0JL20,ChangGHJL23}
		& .2347 & .1345
		& .0491 & .0258
		& .0733 & .0531\\
	CrossCBR~\cite{MaHZWC22}
		& \underline{.2776} & \underline{.1641}
		& \underline{.0791} & \underline{.0433}
		& \underline{.1133} & \underline{.0875}\\
	\midrule
	\textbf{\method (ours)}
		& \textbf{.2804} & \textbf{.1646}
		& \textbf{.0847} & \textbf{.0455}
		& \textbf{.1156} & \textbf{.0876}\\
	\bottomrule
\end{tabular}
\label{table:warm_performance}
\end{table}

\textbf{Baseline warm-start methods.}
We also compare \method with previous warm-start recommendation methods.
MFBPR~\cite{RendleFGS09} and LightGCN~\cite{0001DWLZ020} are item recommendation methods with the modelings of matrix factorization and graph learning, respectively.
SGL~\cite{WuWF0CLX21}, SimGCL~\cite{YuY00CN22}, and LightGCL~\cite{caisimple} are the improved methods of item recommendation with contrastive learning approaches.
DAM~\cite{ChenL0GZ19} is a bundle recommendation method with the modeling of matrix factorization.
BundleNet~\cite{DengWZZWTFC20}, BGCN~\cite{ChangG0JL20,ChangGHJL23}, and CrossCBR~\cite{MaHZWC22} are other bundle recommendation methods with the modeling of graph learning.

\textbf{Evaluation metrics.}
We use Recall$@k$ and nDCG$@k$ metrics as in previous works~\cite{WeiWLNLLC21,MaHZWC22}.
Recall$@k$ measures the proportion of relevant items in the top-$k$ list, while nDCG$@k$ weighs items by their rank.
We set $k$ to $20$.
In tables, bold and underlined values indicate the best and second-best results, respectively.
%

\textbf{Experimental process.}
We conduct experiments in warm-start, cold-start, and all-bundle scenarios as in previous works~\cite{WeiWLNLLC21}.
For the warm-start scenario, interactions are split into 7:1:2 subsets for training, validation, and testing.
In the cold-start scenario, bundles are split in 7:1:2 ratio.
In the all-bundle scenario, interactions are split in 7:1:2 ratio with a half for warm-start and the other half for cold-start bundles.
We report the best Recall$@20$ and nDCG$@20$ within 100 epochs, averaged over three runs.

\textbf{Hyperparameters.}
The hyperparameter setting is described in Appendix~\ref{app:hyperparameters}.

\vspace{-1em}
\subsection{Comparison with Cold-start Methods (Q1)}
\label{subsec:cold_start_accuracy}
In Table~\ref{table:cold_performance}, we compare \method with baseline cold-start methods.
The results show that \method consistently surpasses the baselines across all datasets and settings, verifying its superiority.
Notably, \method achieves 193\% higher nDCG@$20$ compared to CCFCRec, the best competitor, on the iFashion dataset in the all-bundle scenario.
The superiority of \method over other cold-start methods stems from the following two key aspects.
First, \method adeptly harnesses collaborative information of each affiliated item in a bundle through the user-item view.
This approach diverges from existing cold-start methods, which fall short in utilizing user-item interactions for bundle affiliations.
Second, \method explicitly addresses the pronounced skewness in user-bundle interactions through the proposed popularity-based coalescence.
The results reveal the importance of tackling the inherent biases in distributions with extreme skewness, such as user-bundle interactions.
%
%
%

\subsection{Comparison with Warm-start Methods (Q2)}
\label{subsec:warm_start_accuracy}
Table~\ref{table:warm_performance} compares \method with baseline warm-start methods in the warm-start scenario.
Even though \method is primarily designed for cold-start bundle recommendation, it surpasses all the baselines in the warm-start scenario as well.
This indicates \method effectively learns representations from both user-bundle and user-item views by dynamically adjusting the weights of two views in training.
For the baselines, the performance improves when contrastive learning is used as exemplified in SGL, SimGCL, LightGCL, and CrossCBR.
Additionally, graph-based models such as LightGCN, SGL, SimGCL, LightGCL, BGCN, and CrossCBR typically excel over other non-graph-based models.
In light of these observations, \method strategically exploits a graph-based modeling approach and harnesses the power of contrastive learning.
This makes \method robustly achieve the highest performance across diverse scenarios.


\begin{figure}[t]
	\centering
	\begin{subfigure}{0.47\linewidth}
		\centering
		\includegraphics[width=1\linewidth]{./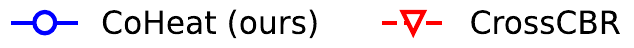}
		\label{fig:temperature_legend}
		\vspace{-1.5em}
	\end{subfigure}
	\\
	\begin{subfigure}{.45\linewidth}
		\centering
		\includegraphics[width=1\linewidth]{./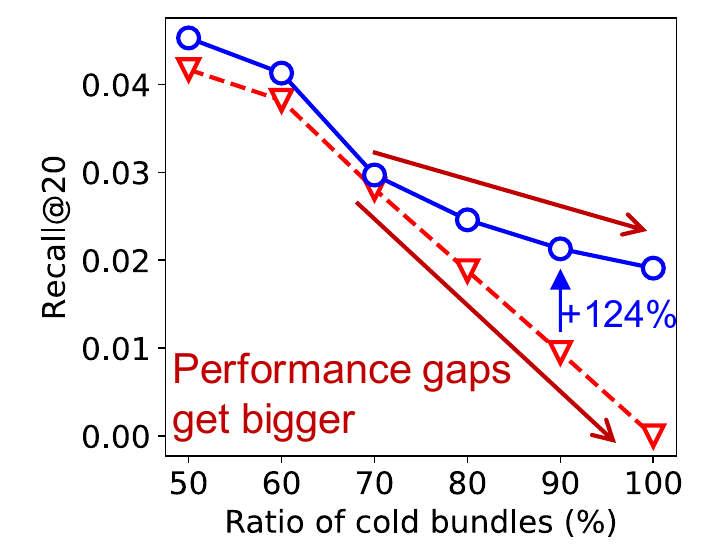}
		\vspace{-1.5em}
		\caption{NetEase}
		\label{fig:cold_youshu}
	\end{subfigure}
    ~
    \begin{subfigure}{.45\linewidth}
        \centering
        \includegraphics[width=1\linewidth]{./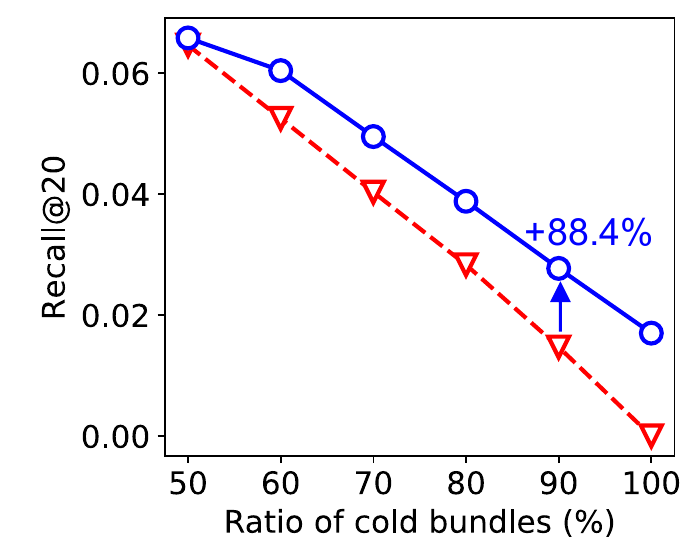}
        \vspace{-1.5em}
        \caption{iFashion}
        \label{fig:cold_ifashion}
    \end{subfigure}
    \caption{Performance comparison by cold bundle ratio.}
    \label{fig:cold_ratio}
\end{figure}

\vspace{-1em}
\subsection{Comparison by Cold Bundle Ratio (Q3)}
\label{subsec:cold_bundle_ratio}

In Figure~\ref{fig:cold_ratio}, we compare the performance between \method and CrossCBR on NetEase and iFashion datasets under varying cold bundle ratios in test datasets.
We focus on investigating the performance disparity as conditions become increasingly colder, despite their analogous performance in warm settings, as shown in Table~\ref{table:warm_performance}.
The figure reveals a pronounced performance disparity between \method and CrossCBR, intensifying as the cold bundle ratios increase.
Remarkably, CrossCBR's performance plummets to zero in entirely cold conditions while \method maintains a more stable trajectory.
This divergence is particularly accentuated in NetEase due to its sparser interactions and larger bundle size.
The superiority of \method over CrossCBR is rooted in strategically reducing biases inherent in sparse interactions by adopting the popularity-based coalescence.
Furthermore, \method enhances the learning of user-item view by exploiting curriculum heating, thereby utilizing bundle affiliation information more effectively.
Thus, this approach is more beneficial for larger bundle sizes.

\begin{table}[t]
\small
\setlength\tabcolsep{4.2pt}
\centering
\caption{Ablation study of \method in cold-start scenario which is our main target.}
\begin{tabular}{l|cc|cc|cc}
	\toprule
	\multirow{3}{*}{\textbf{Model}} & \multicolumn{2}{c|}{\textbf{Youshu}} & \multicolumn{2}{c|}{\textbf{NetEase}} & \multicolumn{2}{c}{\textbf{iFashion}}\\
	& Recall & nDCG & Recall & nDCG & Recall & nDCG\\
	& $@20$ & $@20$ & $@20$ & $@20$ & $@20$ & $@20$\\
	\midrule
	\method-\emph{PC}
		& .0000 & .0000
		& .0000 & .0000
		& .0000 & .0000\\
	\method-\emph{CH}-\emph{Ant}
		& .0177 & .0087
		& .0176 & .0087
		& .0164 & .0093\\
	\method-\emph{CH}-\emph{Fix}
		& .0180 & .0092
		& .0182 & .0090
		& .0164 & .0092\\
	\method-\emph{AU}
		& .0069 & .0031
		& .0029 & .0013
		& .0013 & .0005\\
	\midrule
	\textbf{\method (ours)}
		& \textbf{.0183} & \textbf{.0105}
		& \textbf{.0191} & \textbf{.0093}
		& \textbf{.0170} & \textbf{.0096}\\
	\bottomrule
\end{tabular}
\label{table:ablation}
\end{table}

\subsection{Ablation Study (Q4)}
\label{subsec:ablation_study}
Table~\ref{table:ablation} provides an ablation study that compares \method with its four variants \method-\emph{PC}, \method-\emph{CH}-\emph{Ant}, \method-\emph{CH}-\emph{Fix}, and \method-\emph{AU}.
This study is conducted in the cold-start scenario, which is the primary focus of our work.
In \method-\emph{PC}, we remove the influence of popularity-based coalescence by setting the value of $\gamma_b^{(t)}$ in Equation~\eqref{eq:prediction2} to a constant $0.5$.
For \method-\emph{CH}-\emph{Ant}, we exploit an anti-curriculum learning strategy.
The temperature in Equation~\eqref{eq:CH} is defined as $t:T \rightarrow 0$, initiating the learning process with the user-item view and gradually shifting the focus to the user-bundle view.
For \method-\emph{CH}-\emph{Fix}, we remove the effect of curriculum learning by setting the value of the $\psi^{(t)}$ in Equation~\eqref{eq:CH} to the fixed maximum temperature $\epsilon$ regardless of epochs.
For \method-\emph{AU}, we omit $\mathcal{L}_{AU}$ from Equation~\eqref{eq:finalLoss}, thereby excluding the contrastive learning between the two views.
As shown in the table, \method consistently outperforms all the variants, which verifies all the main ideas help improve the performance.
In particular, \method-\emph{PC} shows a severe performance drop, justifying the importance of satisfying Properties~\ref{prop:behavior} and~\ref{prop:affiliation} when addressing the extreme skewness inherent in cold-start bundle recommendation.


\subsection{Effect of the Maximum Temperature (Q5)}
\label{subsec:temperature}
The maximum temperature $\epsilon$ in Equation~\eqref{eq:CH} is the most influential hyperparameter of \method since it directly affects both popularity-based coalescence and curriculum heating.
Accordingly, we analyze the influence of $\epsilon$ in cold-start scenario on real-world datasets, as depicted in Figure~\ref{fig:temperature}.
As shown in the figure, \method shows low performance for the extreme low temperature because the representations of user-item view are not sufficiently learned.
For the extreme high temperature, the performance degrades because the speed of the curriculum is too fast to fully learn the representation of the two views.
As a result, we set $\epsilon$ to $10^4$ for all datasets since it shows the best performance.

\section{Related Works}
\label{sec:related}

\textbf{Bundle recommendation.}
In the field of bundle recommendations, extensive research has covered various tasks including accurately recommending existing bundles~\cite{PathakGM17,CaoN0WZC17}, creating personalized bundles~\cite{BaiZSQALG19,HeZLC20,DengW0W0ZSTF21,WeiLYWZ22,jeon2023accurate}, facilitating bundle recommendations through conversation~\cite{HeZ0KDM22}, and enhancing the diversity of recommended bundles~\cite{JeonKLLK23}.
Our work focuses on the cold-start problem in bundle recommendation.
Previous works can be categorized based on their modeling structures: matrix factorization-based models~\cite{PathakGM17,CaoN0WZC17,ChenL0GZ19,BroshLSSL22,HeWNC19} and graph learning-based models~\cite{DengWZZWTFC20,ChangG0JL20,ChangGHJL23,MaHZWC22,ZhangDT22,WeiLMWNC23,zhao2022multi,YuLCZ22,RenZFWZ23}.
Such methods operate under the assumption that all bundles have historical interactions, which makes them ill-suited for tackling the cold-start problem.
However, in real-world scenarios, new bundles are introduced daily, leading to an inherent cold-start challenge.
Our work addresses this significant yet overlooked issue, recognizing its potential impact on the field.

\begin{figure}[t]
	\centering
	\begin{subfigure}{0.6\linewidth}
		\centering
		\includegraphics[width=1\linewidth]{./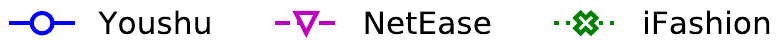}
		\label{fig:temperature_legend}
		\vspace{-1.5em}
	\end{subfigure}
	\\
	\begin{subfigure}{.45\linewidth}
		\centering
		\includegraphics[width=1\linewidth]{./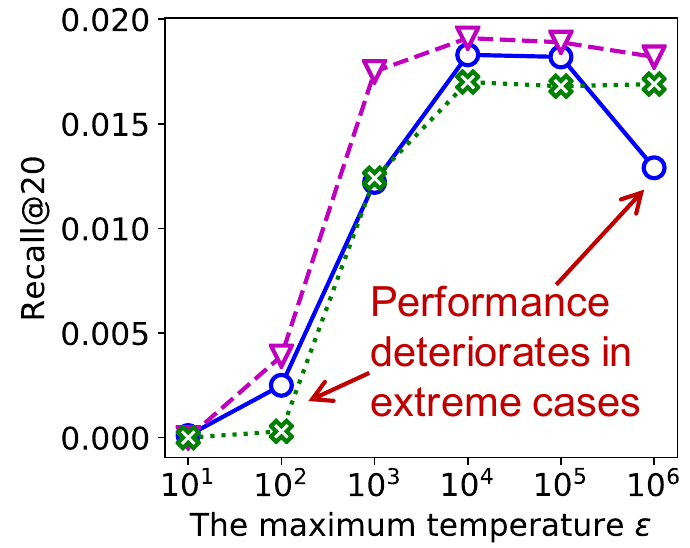}
		\vspace{-1.5em}
		\caption{$\epsilon$ vs. Recall$@20$}
		\label{fig:temperature_recall}
	\end{subfigure}
    ~
    \begin{subfigure}{.45\linewidth}
        \centering
        \includegraphics[width=1\linewidth]{./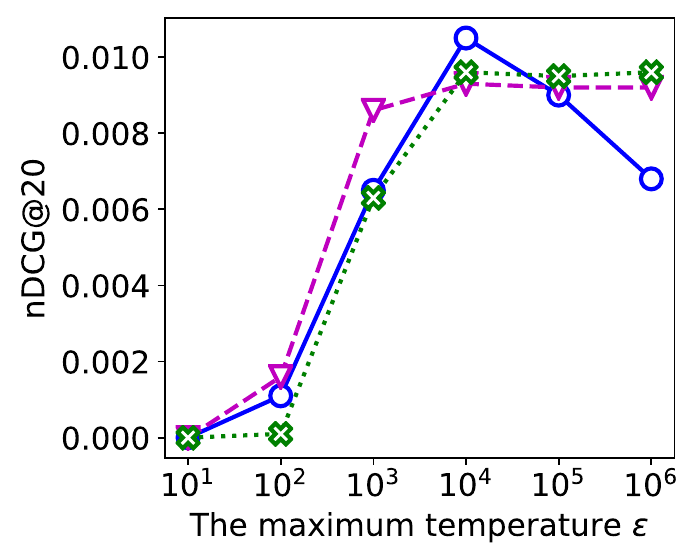}
        \vspace{-1.5em}
        \caption{$\epsilon$ vs. nDCG$@20$}
        \label{fig:temperature_ndcg}
    \end{subfigure}
    \caption{Effect of the maximum temperature $\epsilon$.}
    \label{fig:temperature}
\end{figure} 

\textbf{Cold-start recommendation.}
The cold-start problem, a longstanding challenge in recommender systems, focuses on recommending cold-start items with which users have not yet interacted. 
Existing works are primarily categorized into content-based methods~\cite{WangWY15,ZhangYLXM16}, generative methods~\cite{OordDS13,ChaeKKC19,SunLLRGN20,ChenWHHXLH022,ZhaoRDZW22}, dropout-based methods~\cite{VolkovsYP17,Du00LTC20,ShiZYZHLM19}, meta-learning methods~\cite{VartakTMBL17,PanLATH19}, and constraint-based methods~\cite{BarkanKYK19,ZhuSSC20,WeiWLNLLC21,ZhouZY23}.
However, these methods are designed for item recommendation where contents are often represented as independent entities such as bag-of-words vectors, texts, or images.
Moreover, such prior works have not explicitly addressed the highly skewed distribution of interactions, a critical aspect in bundle recommendation.
Thus, our work excels over these methods in cold-start bundle recommendation by effectively harnessing intricate bundle affiliations from the user-item view and addressing the skewed distribution during training.

\vspace{-1em}
\section{Conclusion}
\label{sec:conclusion}
We propose \method, an accurate method for cold-start bundle recommendation.
\method strategically leverages user-bundle and user-item views to handle the extremely skewed distribution of bundle interactions.
By emphasizing the user-item view for less popular bundles, \method effectively captures richer information than the often sparse user-bundle view.
The incorporation of curriculum learning further enhances the learning process, starting with the simpler user-bundle view embeddings and gradually transitioning to the more intricate user-item view embeddings.
In addition, the contrastive learning of \method bolsters the learning of representations from the two views facilitating effective knowledge transfer from the richer to the sparser view.
Extensive experiments show that \method provides the state-of-the-art performance in cold-start bundle recommendation, achieving up to $193\%$ higher nDCG$@20$ compared to the best competitor. 

\begin{acks}
{This work was supported by Jung-Hun Foundation. 
This work was also supported by Institute of Information \& communications Technology Planning \& Evaluation (IITP) grant funded by the Korea government(MSIT) [NO.2021-0-01343, Artificial Intelligence Graduate School Program (Seoul National University)] and [NO.2021-0-02068, Artificial Intelligence Innovation Hub (Artificial Intelligence Institute, Seoul National University)].
The Institute of Engineering Research and ICT at Seoul National University provided research facilities for this work.
U Kang is the corresponding author.}
\end{acks}

\bibliographystyle{ACM-Reference-Format}
\balance
\bibliography{paper}


\begin{thebibliography}{73}


\ifx \showCODEN    \undefined \def \showCODEN     #1{\unskip}     \fi
\ifx \showDOI      \undefined \def \showDOI       #1{#1}\fi
\ifx \showISBNx    \undefined \def \showISBNx     #1{\unskip}     \fi
\ifx \showISBNxiii \undefined \def \showISBNxiii  #1{\unskip}     \fi
\ifx \showISSN     \undefined \def \showISSN      #1{\unskip}     \fi
\ifx \showLCCN     \undefined \def \showLCCN      #1{\unskip}     \fi
\ifx \shownote     \undefined \def \shownote      #1{#1}          \fi
\ifx \showarticletitle \undefined \def \showarticletitle #1{#1}   \fi
\ifx \showURL      \undefined \def \showURL       {\relax}        \fi
\providecommand\bibfield[2]{#2}
\providecommand\bibinfo[2]{#2}
\providecommand\natexlab[1]{#1}
\providecommand\showeprint[2][]{arXiv:#2}

\bibitem[Bai et~al\mbox{.}(2019)]%
        {BaiZSQALG19}
\bibfield{author}{\bibinfo{person}{Jinze Bai}, \bibinfo{person}{Chang Zhou}, \bibinfo{person}{Junshuai Song}, \bibinfo{person}{Xiaoru Qu}, \bibinfo{person}{Weiting An}, \bibinfo{person}{Zhao Li}, {and} \bibinfo{person}{Jun Gao}.} \bibinfo{year}{2019}\natexlab{}.
\newblock \showarticletitle{Personalized Bundle List Recommendation}. In \bibinfo{booktitle}{\emph{WWW}}.
\newblock


\bibitem[Barkan et~al\mbox{.}(2019)]%
        {BarkanKYK19}
\bibfield{author}{\bibinfo{person}{Oren Barkan}, \bibinfo{person}{Noam Koenigstein}, \bibinfo{person}{Eylon Yogev}, {and} \bibinfo{person}{Ori Katz}.} \bibinfo{year}{2019}\natexlab{}.
\newblock \showarticletitle{{CB2CF:} a neural multiview content-to-collaborative filtering model for completely cold item recommendations}. In \bibinfo{booktitle}{\emph{RecSys}}.
\newblock


\bibitem[Bengio et~al\mbox{.}(2009)]%
        {BengioLCW09}
\bibfield{author}{\bibinfo{person}{Yoshua Bengio}, \bibinfo{person}{J{\'{e}}r{\^{o}}me Louradour}, \bibinfo{person}{Ronan Collobert}, {and} \bibinfo{person}{Jason Weston}.} \bibinfo{year}{2009}\natexlab{}.
\newblock \showarticletitle{Curriculum learning}. In \bibinfo{booktitle}{\emph{ICML}}.
\newblock


\bibitem[Brosh et~al\mbox{.}(2022)]%
        {BroshLSSL22}
\bibfield{author}{\bibinfo{person}{Tzoof~Avny Brosh}, \bibinfo{person}{Amit Livne}, \bibinfo{person}{Oren~Sar Shalom}, \bibinfo{person}{Bracha Shapira}, {and} \bibinfo{person}{Mark Last}.} \bibinfo{year}{2022}\natexlab{}.
\newblock \showarticletitle{{BRUCE:} Bundle Recommendation Using Contextualized item Embeddings}. In \bibinfo{booktitle}{\emph{RecSys}}.
\newblock


\bibitem[Cai et~al\mbox{.}(2023)]%
        {caisimple}
\bibfield{author}{\bibinfo{person}{Xuheng Cai}, \bibinfo{person}{Chao Huang}, \bibinfo{person}{Lianghao Xia}, {and} \bibinfo{person}{Xubin Ren}.} \bibinfo{year}{2023}\natexlab{}.
\newblock \showarticletitle{LightGCL: Simple Yet Effective Graph Contrastive Learning for Recommendation}. In \bibinfo{booktitle}{\emph{ICLR}}.
\newblock


\bibitem[Cao et~al\mbox{.}(2017)]%
        {CaoN0WZC17}
\bibfield{author}{\bibinfo{person}{Da Cao}, \bibinfo{person}{Liqiang Nie}, \bibinfo{person}{Xiangnan He}, \bibinfo{person}{Xiaochi Wei}, \bibinfo{person}{Shunzhi Zhu}, {and} \bibinfo{person}{Tat{-}Seng Chua}.} \bibinfo{year}{2017}\natexlab{}.
\newblock \showarticletitle{Embedding Factorization Models for Jointly Recommending Items and User Generated Lists}. In \bibinfo{booktitle}{\emph{SIGIR}}.
\newblock


\bibitem[Chae et~al\mbox{.}(2019)]%
        {ChaeKKC19}
\bibfield{author}{\bibinfo{person}{Dong{-}Kyu Chae}, \bibinfo{person}{Jin{-}Soo Kang}, \bibinfo{person}{Sang{-}Wook Kim}, {and} \bibinfo{person}{Jaeho Choi}.} \bibinfo{year}{2019}\natexlab{}.
\newblock \showarticletitle{Rating Augmentation with Generative Adversarial Networks towards Accurate Collaborative Filtering}. In \bibinfo{booktitle}{\emph{WWW}}.
\newblock


\bibitem[Chang et~al\mbox{.}(2020)]%
        {ChangG0JL20}
\bibfield{author}{\bibinfo{person}{Jianxin Chang}, \bibinfo{person}{Chen Gao}, \bibinfo{person}{Xiangnan He}, \bibinfo{person}{Depeng Jin}, {and} \bibinfo{person}{Yong Li}.} \bibinfo{year}{2020}\natexlab{}.
\newblock \showarticletitle{Bundle Recommendation with Graph Convolutional Networks}. In \bibinfo{booktitle}{\emph{SIGIR}}.
\newblock


\bibitem[Chang et~al\mbox{.}(2023)]%
        {ChangGHJL23}
\bibfield{author}{\bibinfo{person}{Jianxin Chang}, \bibinfo{person}{Chen Gao}, \bibinfo{person}{Xiangnan He}, \bibinfo{person}{Depeng Jin}, {and} \bibinfo{person}{Yong Li}.} \bibinfo{year}{2023}\natexlab{}.
\newblock \showarticletitle{Bundle Recommendation and Generation With Graph Neural Networks}.
\newblock \bibinfo{journal}{\emph{{IEEE} Trans. Knowl. Data Eng.}} (\bibinfo{year}{2023}).
\newblock


\bibitem[Chen et~al\mbox{.}(2021a)]%
        {ChenCWXWXZ21}
\bibfield{author}{\bibinfo{person}{Hong Chen}, \bibinfo{person}{Yudong Chen}, \bibinfo{person}{Xin Wang}, \bibinfo{person}{Ruobing Xie}, \bibinfo{person}{Rui Wang}, \bibinfo{person}{Feng Xia}, {and} \bibinfo{person}{Wenwu Zhu}.} \bibinfo{year}{2021}\natexlab{a}.
\newblock \showarticletitle{Curriculum Disentangled Recommendation with Noisy Multi-feedback}. In \bibinfo{booktitle}{\emph{NeurIPS}}.
\newblock


\bibitem[Chen et~al\mbox{.}(2022)]%
        {ChenWHHXLH022}
\bibfield{author}{\bibinfo{person}{Hao Chen}, \bibinfo{person}{Zefan Wang}, \bibinfo{person}{Feiran Huang}, \bibinfo{person}{Xiao Huang}, \bibinfo{person}{Yue Xu}, \bibinfo{person}{Yishi Lin}, \bibinfo{person}{Peng He}, {and} \bibinfo{person}{Zhoujun Li}.} \bibinfo{year}{2022}\natexlab{}.
\newblock \showarticletitle{Generative Adversarial Framework for Cold-Start Item Recommendation}. In \bibinfo{booktitle}{\emph{SIGIR}}.
\newblock


\bibitem[Chen et~al\mbox{.}(2019b)]%
        {ChenL0GZ19}
\bibfield{author}{\bibinfo{person}{Liang Chen}, \bibinfo{person}{Yang Liu}, \bibinfo{person}{Xiangnan He}, \bibinfo{person}{Lianli Gao}, {and} \bibinfo{person}{Zibin Zheng}.} \bibinfo{year}{2019}\natexlab{b}.
\newblock \showarticletitle{Matching User with Item Set: Collaborative Bundle Recommendation with Deep Attention Network}. In \bibinfo{booktitle}{\emph{IJCAI}}.
\newblock


\bibitem[Chen et~al\mbox{.}(2020)]%
        {ChenK0H20}
\bibfield{author}{\bibinfo{person}{Ting Chen}, \bibinfo{person}{Simon Kornblith}, \bibinfo{person}{Mohammad Norouzi}, {and} \bibinfo{person}{Geoffrey~E. Hinton}.} \bibinfo{year}{2020}\natexlab{}.
\newblock \showarticletitle{A Simple Framework for Contrastive Learning of Visual Representations}. In \bibinfo{booktitle}{\emph{ICML}}.
\newblock


\bibitem[Chen et~al\mbox{.}(2019a)]%
        {ChenHXGGSLPZZ19}
\bibfield{author}{\bibinfo{person}{Wen Chen}, \bibinfo{person}{Pipei Huang}, \bibinfo{person}{Jiaming Xu}, \bibinfo{person}{Xin Guo}, \bibinfo{person}{Cheng Guo}, \bibinfo{person}{Fei Sun}, \bibinfo{person}{Chao Li}, \bibinfo{person}{Andreas Pfadler}, \bibinfo{person}{Huan Zhao}, {and} \bibinfo{person}{Binqiang Zhao}.} \bibinfo{year}{2019}\natexlab{a}.
\newblock \showarticletitle{{POG:} Personalized Outfit Generation for Fashion Recommendation at Alibaba iFashion}. In \bibinfo{booktitle}{\emph{KDD}}.
\newblock


\bibitem[Chen et~al\mbox{.}(2021b)]%
        {Chen0FHYZ21}
\bibfield{author}{\bibinfo{person}{Yudong Chen}, \bibinfo{person}{Xin Wang}, \bibinfo{person}{Miao Fan}, \bibinfo{person}{Jizhou Huang}, \bibinfo{person}{Shengwen Yang}, {and} \bibinfo{person}{Wenwu Zhu}.} \bibinfo{year}{2021}\natexlab{b}.
\newblock \showarticletitle{Curriculum Meta-Learning for Next {POI} Recommendation}. In \bibinfo{booktitle}{\emph{KDD}}.
\newblock


\bibitem[Deng et~al\mbox{.}(2021)]%
        {DengW0W0ZSTF21}
\bibfield{author}{\bibinfo{person}{Qilin Deng}, \bibinfo{person}{Kai Wang}, \bibinfo{person}{Minghao Zhao}, \bibinfo{person}{Runze Wu}, \bibinfo{person}{Yu Ding}, \bibinfo{person}{Zhene Zou}, \bibinfo{person}{Yue Shang}, \bibinfo{person}{Jianrong Tao}, {and} \bibinfo{person}{Changjie Fan}.} \bibinfo{year}{2021}\natexlab{}.
\newblock \showarticletitle{Build Your Own Bundle - {A} Neural Combinatorial Optimization Method}. In \bibinfo{booktitle}{\emph{MM}}.
\newblock


\bibitem[Deng et~al\mbox{.}(2020)]%
        {DengWZZWTFC20}
\bibfield{author}{\bibinfo{person}{Qilin Deng}, \bibinfo{person}{Kai Wang}, \bibinfo{person}{Minghao Zhao}, \bibinfo{person}{Zhene Zou}, \bibinfo{person}{Runze Wu}, \bibinfo{person}{Jianrong Tao}, \bibinfo{person}{Changjie Fan}, {and} \bibinfo{person}{Liang Chen}.} \bibinfo{year}{2020}\natexlab{}.
\newblock \showarticletitle{Personalized Bundle Recommendation in Online Games}. In \bibinfo{booktitle}{\emph{CIKM}}.
\newblock


\bibitem[Du et~al\mbox{.}(2020)]%
        {Du00LTC20}
\bibfield{author}{\bibinfo{person}{Xiaoyu Du}, \bibinfo{person}{Xiang Wang}, \bibinfo{person}{Xiangnan He}, \bibinfo{person}{Zechao Li}, \bibinfo{person}{Jinhui Tang}, {and} \bibinfo{person}{Tat{-}Seng Chua}.} \bibinfo{year}{2020}\natexlab{}.
\newblock \showarticletitle{How to Learn Item Representation for Cold-Start Multimedia Recommendation?}. In \bibinfo{booktitle}{\emph{MM}}.
\newblock


\bibitem[Gao et~al\mbox{.}(2021)]%
        {GaoYC21}
\bibfield{author}{\bibinfo{person}{Tianyu Gao}, \bibinfo{person}{Xingcheng Yao}, {and} \bibinfo{person}{Danqi Chen}.} \bibinfo{year}{2021}\natexlab{}.
\newblock \showarticletitle{SimCSE: Simple Contrastive Learning of Sentence Embeddings}. In \bibinfo{booktitle}{\emph{EMNLP}}.
\newblock


\bibitem[Geng et~al\mbox{.}(2015)]%
        {GengZBC15}
\bibfield{author}{\bibinfo{person}{Xue Geng}, \bibinfo{person}{Hanwang Zhang}, \bibinfo{person}{Jingwen Bian}, {and} \bibinfo{person}{Tat{-}Seng Chua}.} \bibinfo{year}{2015}\natexlab{}.
\newblock \showarticletitle{Learning Image and User Features for Recommendation in Social Networks}. In \bibinfo{booktitle}{\emph{ICCV}}.
\newblock


\bibitem[Gong et~al\mbox{.}(2021)]%
        {GongLYYCWCNW21}
\bibfield{author}{\bibinfo{person}{Yantao Gong}, \bibinfo{person}{Cao Liu}, \bibinfo{person}{Jiazhen Yuan}, \bibinfo{person}{Fan Yang}, \bibinfo{person}{Xunliang Cai}, \bibinfo{person}{Guanglu Wan}, \bibinfo{person}{Jiansong Chen}, \bibinfo{person}{Ruiyao Niu}, {and} \bibinfo{person}{Houfeng Wang}.} \bibinfo{year}{2021}\natexlab{}.
\newblock \showarticletitle{Density-Based Dynamic Curriculum Learning for Intent Detection}. In \bibinfo{booktitle}{\emph{CIKM}}.
\newblock


\bibitem[He and Chua(2017)]%
        {0001C17}
\bibfield{author}{\bibinfo{person}{Xiangnan He} {and} \bibinfo{person}{Tat{-}Seng Chua}.} \bibinfo{year}{2017}\natexlab{}.
\newblock \showarticletitle{Neural Factorization Machines for Sparse Predictive Analytics}. In \bibinfo{booktitle}{\emph{SIGIR}}.
\newblock


\bibitem[He et~al\mbox{.}(2020a)]%
        {0001DWLZ020}
\bibfield{author}{\bibinfo{person}{Xiangnan He}, \bibinfo{person}{Kuan Deng}, \bibinfo{person}{Xiang Wang}, \bibinfo{person}{Yan Li}, \bibinfo{person}{Yong{-}Dong Zhang}, {and} \bibinfo{person}{Meng Wang}.} \bibinfo{year}{2020}\natexlab{a}.
\newblock \showarticletitle{LightGCN: Simplifying and Powering Graph Convolution Network for Recommendation}. In \bibinfo{booktitle}{\emph{SIGIR}}.
\newblock


\bibitem[He et~al\mbox{.}(2019)]%
        {HeWNC19}
\bibfield{author}{\bibinfo{person}{Yun He}, \bibinfo{person}{Jianling Wang}, \bibinfo{person}{Wei Niu}, {and} \bibinfo{person}{James Caverlee}.} \bibinfo{year}{2019}\natexlab{}.
\newblock \showarticletitle{A Hierarchical Self-Attentive Model for Recommending User-Generated Item Lists}. In \bibinfo{booktitle}{\emph{CIKM}}.
\newblock


\bibitem[He et~al\mbox{.}(2020c)]%
        {HeZLC20}
\bibfield{author}{\bibinfo{person}{Yun He}, \bibinfo{person}{Yin Zhang}, \bibinfo{person}{Weiwen Liu}, {and} \bibinfo{person}{James Caverlee}.} \bibinfo{year}{2020}\natexlab{c}.
\newblock \showarticletitle{Consistency-Aware Recommendation for User-Generated Item List Continuation}. In \bibinfo{booktitle}{\emph{WSDM}}.
\newblock


\bibitem[He et~al\mbox{.}(2020b)]%
        {HeGX020}
\bibfield{author}{\bibinfo{person}{Zhenghua He}, \bibinfo{person}{Chaochen Gu}, \bibinfo{person}{Rui Xu}, {and} \bibinfo{person}{Kaijie Wu}.} \bibinfo{year}{2020}\natexlab{b}.
\newblock \showarticletitle{Automatic Curriculum Generation by Hierarchical Reinforcement Learning}. In \bibinfo{booktitle}{\emph{ICONIP}}.
\newblock


\bibitem[He et~al\mbox{.}(2022)]%
        {HeZ0KDM22}
\bibfield{author}{\bibinfo{person}{Zhankui He}, \bibinfo{person}{Handong Zhao}, \bibinfo{person}{Tong Yu}, \bibinfo{person}{Sungchul Kim}, \bibinfo{person}{Fan Du}, {and} \bibinfo{person}{Julian~J. McAuley}.} \bibinfo{year}{2022}\natexlab{}.
\newblock \showarticletitle{Bundle {MCR:} Towards Conversational Bundle Recommendation}. In \bibinfo{booktitle}{\emph{RecSys}}.
\newblock


\bibitem[Jeon et~al\mbox{.}(2023a)]%
        {jeon2023accurate}
\bibfield{author}{\bibinfo{person}{Hyunsik Jeon}, \bibinfo{person}{Jun-Gi Jang}, \bibinfo{person}{Taehun Kim}, {and} \bibinfo{person}{U Kang}.} \bibinfo{year}{2023}\natexlab{a}.
\newblock \showarticletitle{Accurate bundle matching and generation via multitask learning with partially shared parameters}.
\newblock \bibinfo{journal}{\emph{Plos one}} (\bibinfo{year}{2023}).
\newblock


\bibitem[Jeon et~al\mbox{.}(2023b)]%
        {JeonKLLK23}
\bibfield{author}{\bibinfo{person}{Hyunsik Jeon}, \bibinfo{person}{Jongjin Kim}, \bibinfo{person}{Jaeri Lee}, \bibinfo{person}{Jong{-}eun Lee}, {and} \bibinfo{person}{U Kang}.} \bibinfo{year}{2023}\natexlab{b}.
\newblock \showarticletitle{Aggregately Diversified Bundle Recommendation via Popularity Debiasing and Configuration-Aware Reranking}. In \bibinfo{booktitle}{\emph{PAKDD}}.
\newblock


\bibitem[Jeon et~al\mbox{.}(2022)]%
        {10.1145/3511808.3557226}
\bibfield{author}{\bibinfo{person}{Hyunsik Jeon}, \bibinfo{person}{Jongjin Kim}, \bibinfo{person}{Hoyoung Yoon}, \bibinfo{person}{Jaeri Lee}, {and} \bibinfo{person}{U Kang}.} \bibinfo{year}{2022}\natexlab{}.
\newblock \showarticletitle{Accurate Action Recommendation for Smart Home via Two-Level Encoders and Commonsense Knowledge}. In \bibinfo{booktitle}{\emph{CIKM}}.
\newblock


\bibitem[Jeon et~al\mbox{.}(2019)]%
        {JeonKK19}
\bibfield{author}{\bibinfo{person}{Hyunsik Jeon}, \bibinfo{person}{Bonhun Koo}, {and} \bibinfo{person}{U Kang}.} \bibinfo{year}{2019}\natexlab{}.
\newblock \showarticletitle{Data Context Adaptation for Accurate Recommendation with Additional Information}. In \bibinfo{booktitle}{\emph{BigData}}.
\newblock


\bibitem[Kim et~al\mbox{.}(2023)]%
        {KimJLK23}
\bibfield{author}{\bibinfo{person}{Jongjin Kim}, \bibinfo{person}{Hyunsik Jeon}, \bibinfo{person}{Jaeri Lee}, {and} \bibinfo{person}{U Kang}.} \bibinfo{year}{2023}\natexlab{}.
\newblock \showarticletitle{Diversely Regularized Matrix Factorization for Accurate and Aggregately Diversified Recommendation}. In \bibinfo{booktitle}{\emph{PAKDD}}.
\newblock


\bibitem[Koo et~al\mbox{.}(2020)]%
        {KooJK20}
\bibfield{author}{\bibinfo{person}{Bonhun Koo}, \bibinfo{person}{Hyunsik Jeon}, {and} \bibinfo{person}{U Kang}.} \bibinfo{year}{2020}\natexlab{}.
\newblock \showarticletitle{Accurate News Recommendation Coalescing Personal and Global Temporal Preferences}. In \bibinfo{booktitle}{\emph{PAKDD}}.
\newblock


\bibitem[Koo et~al\mbox{.}(2021)]%
        {KooJK21}
\bibfield{author}{\bibinfo{person}{Bonhun Koo}, \bibinfo{person}{Hyunsik Jeon}, {and} \bibinfo{person}{U Kang}.} \bibinfo{year}{2021}\natexlab{}.
\newblock \showarticletitle{{PGT:} news recommendation coalescing personal and global temporal preferences}.
\newblock \bibinfo{journal}{\emph{Knowl. Inf. Syst.}} (\bibinfo{year}{2021}).
\newblock


\bibitem[Ma et~al\mbox{.}(2022)]%
        {MaHZWC22}
\bibfield{author}{\bibinfo{person}{Yunshan Ma}, \bibinfo{person}{Yingzhi He}, \bibinfo{person}{An Zhang}, \bibinfo{person}{Xiang Wang}, {and} \bibinfo{person}{Tat{-}Seng Chua}.} \bibinfo{year}{2022}\natexlab{}.
\newblock \showarticletitle{CrossCBR: Cross-view Contrastive Learning for Bundle Recommendation}. In \bibinfo{booktitle}{\emph{KDD}}.
\newblock


\bibitem[Manela and Biess(2022)]%
        {ManelaB22}
\bibfield{author}{\bibinfo{person}{Binyamin Manela} {and} \bibinfo{person}{Armin Biess}.} \bibinfo{year}{2022}\natexlab{}.
\newblock \showarticletitle{Curriculum learning with Hindsight Experience Replay for sequential object manipulation tasks}.
\newblock \bibinfo{journal}{\emph{Neural Networks}} (\bibinfo{year}{2022}).
\newblock


\bibitem[Mu(2018)]%
        {Mu18}
\bibfield{author}{\bibinfo{person}{Ruihui Mu}.} \bibinfo{year}{2018}\natexlab{}.
\newblock \showarticletitle{A Survey of Recommender Systems Based on Deep Learning}.
\newblock \bibinfo{journal}{\emph{{IEEE} Access}} (\bibinfo{year}{2018}).
\newblock


\bibitem[Pan et~al\mbox{.}(2019)]%
        {PanLATH19}
\bibfield{author}{\bibinfo{person}{Feiyang Pan}, \bibinfo{person}{Shuokai Li}, \bibinfo{person}{Xiang Ao}, \bibinfo{person}{Pingzhong Tang}, {and} \bibinfo{person}{Qing He}.} \bibinfo{year}{2019}\natexlab{}.
\newblock \showarticletitle{Warm Up Cold-start Advertisements: Improving {CTR} Predictions via Learning to Learn {ID} Embeddings}. In \bibinfo{booktitle}{\emph{SIGIR}}.
\newblock


\bibitem[Park et~al\mbox{.}(2017)]%
        {ParkJK17}
\bibfield{author}{\bibinfo{person}{Haekyu Park}, \bibinfo{person}{Jinhong Jung}, {and} \bibinfo{person}{U Kang}.} \bibinfo{year}{2017}\natexlab{}.
\newblock \showarticletitle{A comparative study of matrix factorization and random walk with restart in recommender systems}. In \bibinfo{booktitle}{\emph{BigData}}.
\newblock


\bibitem[Pathak et~al\mbox{.}(2017)]%
        {PathakGM17}
\bibfield{author}{\bibinfo{person}{Apurva Pathak}, \bibinfo{person}{Kshitiz Gupta}, {and} \bibinfo{person}{Julian~J. McAuley}.} \bibinfo{year}{2017}\natexlab{}.
\newblock \showarticletitle{Generating and Personalizing Bundle Recommendations on \emph{Steam}}. In \bibinfo{booktitle}{\emph{SIGIR}}.
\newblock


\bibitem[Ren et~al\mbox{.}(2023)]%
        {RenZFWZ23}
\bibfield{author}{\bibinfo{person}{Yuyang Ren}, \bibinfo{person}{Haonan Zhang}, \bibinfo{person}{Luoyi Fu}, \bibinfo{person}{Xinbing Wang}, {and} \bibinfo{person}{Chenghu Zhou}.} \bibinfo{year}{2023}\natexlab{}.
\newblock \showarticletitle{Distillation-Enhanced Graph Masked Autoencoders for Bundle Recommendation}. In \bibinfo{booktitle}{\emph{SIGIR}}.
\newblock


\bibitem[Rendle et~al\mbox{.}(2009)]%
        {RendleFGS09}
\bibfield{author}{\bibinfo{person}{Steffen Rendle}, \bibinfo{person}{Christoph Freudenthaler}, \bibinfo{person}{Zeno Gantner}, {and} \bibinfo{person}{Lars Schmidt{-}Thieme}.} \bibinfo{year}{2009}\natexlab{}.
\newblock \showarticletitle{{BPR:} Bayesian Personalized Ranking from Implicit Feedback}. In \bibinfo{booktitle}{\emph{UAI}}.
\newblock


\bibitem[Robinson et~al\mbox{.}(2021)]%
        {RobinsonCSJ21}
\bibfield{author}{\bibinfo{person}{Joshua~David Robinson}, \bibinfo{person}{Ching{-}Yao Chuang}, \bibinfo{person}{Suvrit Sra}, {and} \bibinfo{person}{Stefanie Jegelka}.} \bibinfo{year}{2021}\natexlab{}.
\newblock \showarticletitle{Contrastive Learning with Hard Negative Samples}. In \bibinfo{booktitle}{\emph{ICLR}}.
\newblock


\bibitem[Schein et~al\mbox{.}(2002)]%
        {ScheinPUP02}
\bibfield{author}{\bibinfo{person}{Andrew~I. Schein}, \bibinfo{person}{Alexandrin Popescul}, \bibinfo{person}{Lyle~H. Ungar}, {and} \bibinfo{person}{David~M. Pennock}.} \bibinfo{year}{2002}\natexlab{}.
\newblock \showarticletitle{Methods and metrics for cold-start recommendations}. In \bibinfo{booktitle}{\emph{SIGIR}}.
\newblock


\bibitem[Shi et~al\mbox{.}(2019)]%
        {ShiZYZHLM19}
\bibfield{author}{\bibinfo{person}{Shaoyun Shi}, \bibinfo{person}{Min Zhang}, \bibinfo{person}{Xinxing Yu}, \bibinfo{person}{Yongfeng Zhang}, \bibinfo{person}{Bin Hao}, \bibinfo{person}{Yiqun Liu}, {and} \bibinfo{person}{Shaoping Ma}.} \bibinfo{year}{2019}\natexlab{}.
\newblock \showarticletitle{Adaptive Feature Sampling for Recommendation with Missing Content Feature Values}. In \bibinfo{booktitle}{\emph{CIKM}}.
\newblock


\bibitem[Sun et~al\mbox{.}(2020)]%
        {SunLLRGN20}
\bibfield{author}{\bibinfo{person}{Changfeng Sun}, \bibinfo{person}{Han Liu}, \bibinfo{person}{Meng Liu}, \bibinfo{person}{Zhaochun Ren}, \bibinfo{person}{Tian Gan}, {and} \bibinfo{person}{Liqiang Nie}.} \bibinfo{year}{2020}\natexlab{}.
\newblock \showarticletitle{{LARA:} Attribute-to-feature Adversarial Learning for New-item Recommendation}. In \bibinfo{booktitle}{\emph{WSDM}}.
\newblock


\bibitem[Sun et~al\mbox{.}(2022)]%
        {Sun0F0QO22}
\bibfield{author}{\bibinfo{person}{Zhu Sun}, \bibinfo{person}{Jie Yang}, \bibinfo{person}{Kaidong Feng}, \bibinfo{person}{Hui Fang}, \bibinfo{person}{Xinghua Qu}, {and} \bibinfo{person}{Yew~Soon Ong}.} \bibinfo{year}{2022}\natexlab{}.
\newblock \showarticletitle{Revisiting Bundle Recommendation: Datasets, Tasks, Challenges and Opportunities for Intent-aware Product Bundling}. In \bibinfo{booktitle}{\emph{SIGIR}}.
\newblock


\bibitem[van~den Oord et~al\mbox{.}(2013)]%
        {OordDS13}
\bibfield{author}{\bibinfo{person}{A{\"{a}}ron van~den Oord}, \bibinfo{person}{Sander Dieleman}, {and} \bibinfo{person}{Benjamin Schrauwen}.} \bibinfo{year}{2013}\natexlab{}.
\newblock \showarticletitle{Deep content-based music recommendation}. In \bibinfo{booktitle}{\emph{NIPS}}.
\newblock


\bibitem[van~den Oord et~al\mbox{.}(2018)]%
        {abs-1807-03748}
\bibfield{author}{\bibinfo{person}{A{\"{a}}ron van~den Oord}, \bibinfo{person}{Yazhe Li}, {and} \bibinfo{person}{Oriol Vinyals}.} \bibinfo{year}{2018}\natexlab{}.
\newblock \showarticletitle{Representation Learning with Contrastive Predictive Coding}.
\newblock \bibinfo{journal}{\emph{CoRR}} (\bibinfo{year}{2018}).
\newblock


\bibitem[Vartak et~al\mbox{.}(2017)]%
        {VartakTMBL17}
\bibfield{author}{\bibinfo{person}{Manasi Vartak}, \bibinfo{person}{Arvind Thiagarajan}, \bibinfo{person}{Conrado Miranda}, \bibinfo{person}{Jeshua Bratman}, {and} \bibinfo{person}{Hugo Larochelle}.} \bibinfo{year}{2017}\natexlab{}.
\newblock \showarticletitle{A Meta-Learning Perspective on Cold-Start Recommendations for Items}. In \bibinfo{booktitle}{\emph{NIPS}}.
\newblock


\bibitem[Volkovs et~al\mbox{.}(2017)]%
        {VolkovsYP17}
\bibfield{author}{\bibinfo{person}{Maksims Volkovs}, \bibinfo{person}{Guang~Wei Yu}, {and} \bibinfo{person}{Tomi Poutanen}.} \bibinfo{year}{2017}\natexlab{}.
\newblock \showarticletitle{DropoutNet: Addressing Cold Start in Recommender Systems}. In \bibinfo{booktitle}{\emph{NIPS}}.
\newblock


\bibitem[Wang et~al\mbox{.}(2022b)]%
        {WangYM000M22}
\bibfield{author}{\bibinfo{person}{Chenyang Wang}, \bibinfo{person}{Yuanqing Yu}, \bibinfo{person}{Weizhi Ma}, \bibinfo{person}{Min Zhang}, \bibinfo{person}{Chong Chen}, \bibinfo{person}{Yiqun Liu}, {and} \bibinfo{person}{Shaoping Ma}.} \bibinfo{year}{2022}\natexlab{b}.
\newblock \showarticletitle{Towards Representation Alignment and Uniformity in Collaborative Filtering}. In \bibinfo{booktitle}{\emph{KDD}}.
\newblock


\bibitem[Wang et~al\mbox{.}(2015)]%
        {WangWY15}
\bibfield{author}{\bibinfo{person}{Hao Wang}, \bibinfo{person}{Naiyan Wang}, {and} \bibinfo{person}{Dit{-}Yan Yeung}.} \bibinfo{year}{2015}\natexlab{}.
\newblock \showarticletitle{Collaborative Deep Learning for Recommender Systems}. In \bibinfo{booktitle}{\emph{KDD}}.
\newblock


\bibitem[Wang and Isola(2020)]%
        {0001I20}
\bibfield{author}{\bibinfo{person}{Tongzhou Wang} {and} \bibinfo{person}{Phillip Isola}.} \bibinfo{year}{2020}\natexlab{}.
\newblock \showarticletitle{Understanding Contrastive Representation Learning through Alignment and Uniformity on the Hypersphere}. In \bibinfo{booktitle}{\emph{ICML}}.
\newblock


\bibitem[Wang et~al\mbox{.}(2022a)]%
        {WangCZ22}
\bibfield{author}{\bibinfo{person}{Xin Wang}, \bibinfo{person}{Yudong Chen}, {and} \bibinfo{person}{Wenwu Zhu}.} \bibinfo{year}{2022}\natexlab{a}.
\newblock \showarticletitle{A Survey on Curriculum Learning}.
\newblock \bibinfo{journal}{\emph{{IEEE} Trans. Pattern Anal. Mach. Intell.}} (\bibinfo{year}{2022}).
\newblock


\bibitem[Wei et~al\mbox{.}(2022)]%
        {WeiLYWZ22}
\bibfield{author}{\bibinfo{person}{Penghui Wei}, \bibinfo{person}{Shaoguo Liu}, \bibinfo{person}{Xuanhua Yang}, \bibinfo{person}{Liang Wang}, {and} \bibinfo{person}{Bo Zheng}.} \bibinfo{year}{2022}\natexlab{}.
\newblock \showarticletitle{Towards Personalized Bundle Creative Generation with Contrastive Non-Autoregressive Decoding}. In \bibinfo{booktitle}{\emph{SIGIR}}.
\newblock


\bibitem[Wei et~al\mbox{.}(2023)]%
        {WeiLMWNC23}
\bibfield{author}{\bibinfo{person}{Yinwei Wei}, \bibinfo{person}{Xiaohao Liu}, \bibinfo{person}{Yunshan Ma}, \bibinfo{person}{Xiang Wang}, \bibinfo{person}{Liqiang Nie}, {and} \bibinfo{person}{Tat{-}Seng Chua}.} \bibinfo{year}{2023}\natexlab{}.
\newblock \showarticletitle{Strategy-aware Bundle Recommender System}. In \bibinfo{booktitle}{\emph{SIGIR}}.
\newblock


\bibitem[Wei et~al\mbox{.}(2021)]%
        {WeiWLNLLC21}
\bibfield{author}{\bibinfo{person}{Yinwei Wei}, \bibinfo{person}{Xiang Wang}, \bibinfo{person}{Qi Li}, \bibinfo{person}{Liqiang Nie}, \bibinfo{person}{Yan Li}, \bibinfo{person}{Xuanping Li}, {and} \bibinfo{person}{Tat{-}Seng Chua}.} \bibinfo{year}{2021}\natexlab{}.
\newblock \showarticletitle{Contrastive Learning for Cold-Start Recommendation}. In \bibinfo{booktitle}{\emph{MM}}.
\newblock


\bibitem[Wu et~al\mbox{.}(2021)]%
        {WuWF0CLX21}
\bibfield{author}{\bibinfo{person}{Jiancan Wu}, \bibinfo{person}{Xiang Wang}, \bibinfo{person}{Fuli Feng}, \bibinfo{person}{Xiangnan He}, \bibinfo{person}{Liang Chen}, \bibinfo{person}{Jianxun Lian}, {and} \bibinfo{person}{Xing Xie}.} \bibinfo{year}{2021}\natexlab{}.
\newblock \showarticletitle{Self-supervised Graph Learning for Recommendation}. In \bibinfo{booktitle}{\emph{SIGIR}}.
\newblock


\bibitem[Xia et~al\mbox{.}(2022)]%
        {XiaWCHL22}
\bibfield{author}{\bibinfo{person}{Jun Xia}, \bibinfo{person}{Lirong Wu}, \bibinfo{person}{Jintao Chen}, \bibinfo{person}{Bozhen Hu}, {and} \bibinfo{person}{Stan~Z. Li}.} \bibinfo{year}{2022}\natexlab{}.
\newblock \showarticletitle{SimGRACE: {A} Simple Framework for Graph Contrastive Learning without Data Augmentation}. In \bibinfo{booktitle}{\emph{WWW}}.
\newblock


\bibitem[Yan et~al\mbox{.}(2021)]%
        {YanLWZWX20}
\bibfield{author}{\bibinfo{person}{Yuanmeng Yan}, \bibinfo{person}{Rumei Li}, \bibinfo{person}{Sirui Wang}, \bibinfo{person}{Fuzheng Zhang}, \bibinfo{person}{Wei Wu}, {and} \bibinfo{person}{Weiran Xu}.} \bibinfo{year}{2021}\natexlab{}.
\newblock \showarticletitle{ConSERT: {A} Contrastive Framework for Self-Supervised Sentence Representation Transfer}. In \bibinfo{booktitle}{\emph{ACL}}.
\newblock


\bibitem[Yu et~al\mbox{.}(2022b)]%
        {YuY00CN22}
\bibfield{author}{\bibinfo{person}{Junliang Yu}, \bibinfo{person}{Hongzhi Yin}, \bibinfo{person}{Xin Xia}, \bibinfo{person}{Tong Chen}, \bibinfo{person}{Lizhen Cui}, {and} \bibinfo{person}{Quoc Viet~Hung Nguyen}.} \bibinfo{year}{2022}\natexlab{b}.
\newblock \showarticletitle{Are Graph Augmentations Necessary?: Simple Graph Contrastive Learning for Recommendation}. In \bibinfo{booktitle}{\emph{SIGIR}}.
\newblock


\bibitem[Yu et~al\mbox{.}(2020)]%
        {YuIIA20}
\bibfield{author}{\bibinfo{person}{Qing Yu}, \bibinfo{person}{Daiki Ikami}, \bibinfo{person}{Go Irie}, {and} \bibinfo{person}{Kiyoharu Aizawa}.} \bibinfo{year}{2020}\natexlab{}.
\newblock \showarticletitle{Multi-task Curriculum Framework for Open-Set Semi-supervised Learning}. In \bibinfo{booktitle}{\emph{ECCV}}.
\newblock


\bibitem[Yu et~al\mbox{.}(2022a)]%
        {YuLCZ22}
\bibfield{author}{\bibinfo{person}{Zhouxin Yu}, \bibinfo{person}{Jintang Li}, \bibinfo{person}{Liang Chen}, {and} \bibinfo{person}{Zibin Zheng}.} \bibinfo{year}{2022}\natexlab{a}.
\newblock \showarticletitle{Unifying multi-associations through hypergraph for bundle recommendation}.
\newblock \bibinfo{journal}{\emph{Knowl. Based Syst.}} (\bibinfo{year}{2022}).
\newblock


\bibitem[Zhan et~al\mbox{.}(2021)]%
        {Zhan0WC21}
\bibfield{author}{\bibinfo{person}{Runzhe Zhan}, \bibinfo{person}{Xuebo Liu}, \bibinfo{person}{Derek~F. Wong}, {and} \bibinfo{person}{Lidia~S. Chao}.} \bibinfo{year}{2021}\natexlab{}.
\newblock \showarticletitle{Meta-Curriculum Learning for Domain Adaptation in Neural Machine Translation}. In \bibinfo{booktitle}{\emph{AAAI}}.
\newblock


\bibitem[Zhang et~al\mbox{.}(2021)]%
        {ZhangWHWWOS21}
\bibfield{author}{\bibinfo{person}{Bowen Zhang}, \bibinfo{person}{Yidong Wang}, \bibinfo{person}{Wenxin Hou}, \bibinfo{person}{Hao Wu}, \bibinfo{person}{Jindong Wang}, \bibinfo{person}{Manabu Okumura}, {and} \bibinfo{person}{Takahiro Shinozaki}.} \bibinfo{year}{2021}\natexlab{}.
\newblock \showarticletitle{FlexMatch: Boosting Semi-Supervised Learning with Curriculum Pseudo Labeling}. In \bibinfo{booktitle}{\emph{NeurIPS}}.
\newblock


\bibitem[Zhang et~al\mbox{.}(2016)]%
        {ZhangYLXM16}
\bibfield{author}{\bibinfo{person}{Fuzheng Zhang}, \bibinfo{person}{Nicholas~Jing Yuan}, \bibinfo{person}{Defu Lian}, \bibinfo{person}{Xing Xie}, {and} \bibinfo{person}{Wei{-}Ying Ma}.} \bibinfo{year}{2016}\natexlab{}.
\newblock \showarticletitle{Collaborative Knowledge Base Embedding for Recommender Systems}. In \bibinfo{booktitle}{\emph{KDD}}.
\newblock


\bibitem[Zhang et~al\mbox{.}(2022)]%
        {ZhangDT22}
\bibfield{author}{\bibinfo{person}{Zhenning Zhang}, \bibinfo{person}{Boxin Du}, {and} \bibinfo{person}{Hanghang Tong}.} \bibinfo{year}{2022}\natexlab{}.
\newblock \showarticletitle{SuGeR: {A} Subgraph-based Graph Convolutional Network Method for Bundle Recommendation}. In \bibinfo{booktitle}{\emph{CIKM}}.
\newblock


\bibitem[Zhao et~al\mbox{.}(2022b)]%
        {zhao2022multi}
\bibfield{author}{\bibinfo{person}{Sen Zhao}, \bibinfo{person}{Wei Wei}, \bibinfo{person}{Ding Zou}, {and} \bibinfo{person}{Xianling Mao}.} \bibinfo{year}{2022}\natexlab{b}.
\newblock \showarticletitle{Multi-view intent disentangle graph networks for bundle recommendation}. In \bibinfo{booktitle}{\emph{AAAI}}.
\newblock


\bibitem[Zhao et~al\mbox{.}(2022a)]%
        {ZhaoRDZW22}
\bibfield{author}{\bibinfo{person}{Xu Zhao}, \bibinfo{person}{Yi Ren}, \bibinfo{person}{Ying Du}, \bibinfo{person}{Shenzheng Zhang}, {and} \bibinfo{person}{Nian Wang}.} \bibinfo{year}{2022}\natexlab{a}.
\newblock \showarticletitle{Improving Item Cold-start Recommendation via Model-agnostic Conditional Variational Autoencoder}. In \bibinfo{booktitle}{\emph{SIGIR}}.
\newblock


\bibitem[Zhou et~al\mbox{.}(2021)]%
        {ZhouMZZY21}
\bibfield{author}{\bibinfo{person}{Chang Zhou}, \bibinfo{person}{Jianxin Ma}, \bibinfo{person}{Jianwei Zhang}, \bibinfo{person}{Jingren Zhou}, {and} \bibinfo{person}{Hongxia Yang}.} \bibinfo{year}{2021}\natexlab{}.
\newblock \showarticletitle{Contrastive Learning for Debiased Candidate Generation in Large-Scale Recommender Systems}. In \bibinfo{booktitle}{\emph{KDD}}.
\newblock


\bibitem[Zhou et~al\mbox{.}(2023)]%
        {ZhouZY23}
\bibfield{author}{\bibinfo{person}{Zhihui Zhou}, \bibinfo{person}{Lilin Zhang}, {and} \bibinfo{person}{Ning Yang}.} \bibinfo{year}{2023}\natexlab{}.
\newblock \showarticletitle{Contrastive Collaborative Filtering for Cold-Start Item Recommendation}. In \bibinfo{booktitle}{\emph{The Web Conference}}.
\newblock


\bibitem[Zhu et~al\mbox{.}(2020)]%
        {ZhuSSC20}
\bibfield{author}{\bibinfo{person}{Ziwei Zhu}, \bibinfo{person}{Shahin Sefati}, \bibinfo{person}{Parsa Saadatpanah}, {and} \bibinfo{person}{James Caverlee}.} \bibinfo{year}{2020}\natexlab{}.
\newblock \showarticletitle{Recommendation for New Users and New Items via Randomized Training and Mixture-of-Experts Transformation}. In \bibinfo{booktitle}{\emph{SIGIR}}.
\newblock


\end{thebibliography}

\appendix
\section{Complexity Analysis}
\label{app:complexity}
The space and time complexity of \method is as follows.

\textbf{Space complexity.}
The space complexity of \method is $\mathcal{O}((U+B+I)d)$ where $U$, $B$, and $I$ are respectively the numbers of users, bundles, and items, and $d$ is the dimensionality of embeddings.

\textbf{Time complexity.}
The time complexity is evaluated twofold: graph learning and contrastive learning.
We notice that the time complexities of popularity-based coalescence and curriculum heating are neglectable.
The time complexity of the graph learning is  $\mathcal{O}(|E|Kds\frac{|E|}{T})$, where $|E|$ is the number of edges, $K$ is the number of layers, $d$ is the dimensionality of embeddings, $s$ is the number of epochs, and $T$ is the batch size.
The time complexity of the contrastive learning is $\mathcal{O}(2d(T + 2T^2)s\frac{|E|}{T})$.

\section{Hyperparameter Settings}
\label{app:hyperparameters}
We utilize the baselines with their official implementations and use their reported best hyperparameters.
We implement \method with PyTorch.
We set the dimensionality $d$ of node embeddings as $64$.
The other hyperparameters are grid-searched:
the learning rate in \{$0.001$, $0.0001$, $0.00001$\},
$\lambda_1$ in \{$0.1$, $0.2$, $0.5$, $1.0$\},
$\lambda_2$ in \{$0.00004$, $0.0001$, $0.0004$, $0.001$\},
$K$ in \{$1$, $2$\},
and the maximum temperature $\epsilon$ in \{$10^1$, $10^2$, $10^3$, $10^4$, $10^5$, $10^6$\}.

\end{document}